\newtheorem{thm}{Theorem}
\newtheorem{lemma}{Lemma}
\newtheorem{corollary}{Corollary}
\DeclareMathOperator*{\argmin}{arg\,min}
\newdimen\nodeDist
\newdimen\nodeDistdata
\renewcommand\@biblabel[1]{}
\newcommand\correspondingauthor{\thanks{Corresponding author. Email address : ycho@utep.edu}}
\title{Analysis of regression discontinuity designs using censored data}
\author[1]{Youngjoo Cho\correspondingauthor}
\date{}
\author[2]{Chen Hu}
\author[3]{Debashis Ghosh}
\affil[1]{Department of Mathematical Sciences, The University of Texas at El Paso, El Paso, TX 79912}
\affil[2]{Division of Biostatistics and Bioinformatics, Sidney Kimmel Comprehensive Cancer Center, Johns Hopkins University School of Medicine, Baltimore, MD 21205}
\affil[3]{Department of Biostatistics and Informatics, University of Colorado Anschutz Medical Campus, Aurora, CO 80045}
\begin{document}
\maketitle
\begin{abstract}
In medical settings, treatment assignment may be determined by a clinically important covariate that predicts patients' risk of event. There is a class of methods from the social science literature known as regression discontinuity (RD) designs that can be used to estimate the treatment effect in this situation.  Under certain assumptions, such an estimand enjoys a causal interpretation. However, few authors have discussed the use of RD for censored data. In this paper, we show how to estimate causal effects under the regression discontinuity design for censored data. The proposed estimation procedure employs a class of censoring unbiased transformations that includes inverse probability censored weighting and doubly robust transformation schemes. Simulation studies demonstrate the utility of the proposed methodology.
\end{abstract}
\noindent \textit{Keywords: Causal effect; Double robustness; Instrumental variable; Observational studies; Survival analysis.}
\section{Introduction} 
\indent In observational studies, scientific interest typically focuses on estimation of a causal estimand.  The presence of confounding variables makes its estimation difficult.  To perform causal inference, the analyst typically relies on several assumptions. One important assumption is the `no unmeasured confounders' assumption, which implies that treatment assignment is independent of potential outcome given confounders. This has also been referred to as the unconfoundedness assumption. However, this assumption is typically not empirically testable. \\
\indent A study design that has been used in the social sciences is known as the regression discontinuity (RD) design.    
One appealing feature to the RD design is that the treatment assignment is determined either deterministically or probabilistically by continuous variable of interest, termed the forcing variable.   It turns out for such a design, the no unmeasured confounders assumption is not required for inferring causality. The study of RD designs was initiated by Thistlethwaite and Campbell (1960) and has been developed further in many subsequent studies. For example, Hahn, Todd and Van der Klaauw (1999) and Hahn, Todd and Van der Klaauw (2001) prove theoretical results on RD designs. Ludwig and Miller (2005) and Ludwig and Miller (2007) propose bandwidth selection procedures for nonparametric estimation with application to evaluating the effects of funding on educational program. Imbens and Kalyanaraman (2012). propose an optimal bandwidth selection procedure. Recently, Calonico, Cattaneo and Titiunik (2014) have developed bias-corrected nonparametric estimation approaches whose confidence intervals demonstrate improved coverage relative to those from other RD estimators. \\
\indent Much of the work in the previous paragraph dealt with the case of uncensored data.  In many biomedical settings, the outcome will be a time to event that is potentially subject to right censoring.  For this problem, there has been limited work in the area of RD designs.  Recently, Bor et al. (2014) and Moscoe, Bor and B\"{a}rnighausen (2015) discuss the use RD designs in medical and epidemiological studies. In their paper, the goal is to test for the effect of early versus late treatment initiation to HIV patients on survival. 
\\
\indent In practice, RD designs are useful for detection of treatment effect in the studies with censored data. For example, in the Prostate, Lung, Colorectal and Ovarian (PLCO) cancer study, the role of prostate cancer screening remains controversial partially due to the findings of PLCO which failed to show reduction in prostate cancer-specific survival or overall mortality by a prostate-specific antigen (PSA) screening strategy (Andriole et al. 2009). Shoag et al. (2015) consider only the treatment group and propose that a PSA level $\geq$ 4.0mg/nl is a reasonable cutoff for some outcomes by using RD design, but they did not consider the censored nature of the outcome variable in their analysis. It would be of great interest to clinicians for understanding the screening effect of PSA level 4.0mg/nl on time to death or to first cancer incidence. \\
\indent  For the use of RD designs in survival analysis, one challenge is the existence of censoring. Applying standard nonparametric estimation procedures from the uncensored data setting is not feasible. 
One way to solve this issue is to use existing RD estimation procedures with some transformation of the response that behaves in a manner analogous to the uncensored data case. 
Fan and Gijbels (1994) propose local linear regression based on a transformed response for censored data. Their proposed transformation includes the inverse probability weighted censoring (IPCW) method, a commonly used method in the missing data literature to handle censoring. However, this method is inefficient in that it does not include information of censored observations in estimation. Rubin and Van der Laan (2007) overcome this difficulty by proposing a doubly robust transformation of the response which requires modeling of failure time distribution as well as censoring distribution. This approach shows promise compared to IPCW methods in a prognostic modeling  (Steingrimsson et al. 2016; Steingrimsson, Diao and Strawderman, 2019), but no studies have shown its efficiency gains for estimation of parameters in regression modeling with purpose of inference. \\
\indent In this paper, we propose a class of estimation procedures in the RD design for censored data. In Section 2, we review the relevant data structures and discuss approaches of RD design for uncensored data. Section 3 and 4 describe the extension of RD designs to censored data as well as laying out the methodology with attendant asymptotic results.  In Section 5, simulation studies are performed which show the utility of our approach.  In Section 6, we apply our method to the PLCO dataset to test effect of treatment assignment by PSA.  Some discussion concludes Section 7. 
\section{Review of RD design for uncensored data} 
Before discussing the proposed methodology, we first introduce the RD design for uncensored data using a potential outcomes framework. Let $(Y_*^{(1)},Y_*^{(0)})$ be the potential outcomes under treatment and control; we use $Z$ to define treatment. We define ${\bf W}$ to be a vector of forcing variables; for simplicity, we consider only one forcing variable $W$.  As the name suggests, the forcing variable determines the treatment (Imbens and Lemieux, 2008).   Since only one of the potential outcomes is observable, the observed response is $Y_* = ZY_*^{(1)} + (1-Z)Y_*^{(0)}$. The main characteristic of the RD design is that the treatment assignment $Z$ depends on a function of $W$, which can be deterministic or probabilistic.  This corresponds to the sharp and fuzzy RD designs, respectively. \\
\indent One key assumption in RD designs is that the there is no gaming of the forcing variable $W$ (McCrary, 2008).  This is checked in practice by empirically plotting the distribution of $W$ and checking to see that there is no `clumping' around the cutoff value of interest.     
When the forcing variable is not affected by other variables, we have a `locally randomized' study based on the cutpoint of the forcing variable from this assumption (Lee and Lemieux, 2010). This `local randomization' is a compelling feature compared to usual observational studies and allows for establishing causality as in randomized experiments (Bor et al. 2014). \\ 
\indent In the sharp RD design, treatment assignment is decided by a deterministic function of forcing variable. Let $H^*$ be a known discontinuous function. Then for the sharp RD, $Z = H^*(W)$. The main causal effect of interest is the average treatment effect at the discontinuity point $w_0$. By design, if the value of the forcing variable is greater than or equal to the cutpoint, $E\{Y_*^{(1)}|W\} = E(Y_*|W)$.  Similarly, $E\{Y_*^{(0)}|W\} = E(Y_*|W)$ if the value of forcing variable is less than cutpoint. Since our interest focuses on the causal effect at $w_0$, with a continuity assumption for $E\{Y_*^{(1)}|W\}$ and $E\{Y_*^{(0)}|W\}$, we can identify limits around the threshold (Bor et al. 2014). 
\begin{gather}
\label{eq:1a}
E[Y_*^{(1)} - Y_*^{(0)}|W = w_0] = \lim_{w \downarrow w_0} E(Y_*|W = w) - \lim_{w \uparrow w_0} E(Y_*|W = w).
\end{gather}
\indent In the fuzzy RD design, treatment assignment is a probabilistic function of the forcing variable. There exists a jump in the probability of treatment assignment at the threshold (Imbens and Lemieux, 2008) but it is less than one and depends on the forcing variable. Let $q_1$ and $q_2$ be monotone functions of the forcing variable. For fuzzy RD designs, the probability of receiving treatment is 
\begin{gather*}
P(Z = 1|W) = 
			\begin{cases}
			q_1(W) \ \text{if} \ W < w_0 \\
			q_2(W) \ \text{if} \ W \geq w_0.
			\end{cases}
\end{gather*}
By contrast, for sharp RD designs, the probability of receiving treatment given that forcing variable is greater than certain cutoff is one. Hence the treatment status and treatment assignment are the same. In fuzzy RD designs,  
\begin{gather*}
\lim_{w \downarrow w_0} P(Z=1|W=w) \ne \lim_{w \uparrow w_0} P(Z=1|W=w)
\end{gather*}
with the difference of $\lim_{w \downarrow w_0} P(Z=1|W=w)$ and $\lim_{w \uparrow w_0} P(Z=1|W=w)$ not being equal to one. Hence, for fuzzy RD, the treatment assignment is not equivalent to the treatment status. One can consider the sharp RD as a special case of the fuzzy RD when $\lim_{w \downarrow w_0} P(Z=1|W=w) - \lim_{w \uparrow w_0} P(Z=1|W=w)=1$. Now expression (2.1) is not the average treatment effect in a fuzzy RD design. However, since the forcing variable determines treatment assignment, it effectively functions as an instrumental variable. By using the arguments in Angrist, Imbens and Rubin (1996), we can obtain the so-called complier average treatment effect.
This effect is the main identifiable causal estimand in the fuzzy RD design and is important in practice because participants in the study may not comply with the initial treatment assignment. Formally, we have 
\begin{gather*}
\frac{\lim_{w \downarrow w_0} E(Y_*|W = w) - \lim_{w \uparrow w_0} E(Y_*|W = w)}{\lim_{w \downarrow w_0} P(Z|W = w) - \lim_{w \uparrow w_0} P(Z|W = w)} = E[Y_*^{(1)} - Y_*^{(0)}|W = w_0,\text{subject is a complier}]
\end{gather*}   
In this case, the unconfoundedness assumption is not realistic because people with similar values of the forcing variable away from the cutpoint might receive different treatment. These two subjects will thus not be comparable (Imbens and Lemieux, 2008).

\section{Extension of RD designs to censored data: data structure and assumptions}
As in the uncensored data case, we define $(T^{(1)},T^{(0)})$ as potential outcomes under treatment and control assignments, respectively. In survival data, $(T^{(1)},T^{(0)})$ are potential times to event for the treatment and control. Without censoring, time to failure $T$ is only observable for either of treatment and control group, i.e. $T = ZT^{(1)} + (1-Z)T^{(0)}$. Let $X = (T,W,Z)$ as full data. Let $C$ be time to censoring. We can only observe $\tilde{T} = T \wedge C$, $\Delta = I(T \leq C)$.  Define $O = (\tilde{T},\Delta,Z,W)$. The observable data are $O_i = (\tilde{T}_i,W_i,Z_i,\Delta_i),i=1,\ldots,n$. We assume that the observed data are independent and identically distributed. We apply a logarithm transform to the response.  Define 
\begin{gather*}
Y^{(1)} = \log T^{(1)} \quad Y^{(0)} = \log T^{(0)} \quad Y = \log T \quad \tilde{Y} = Y \wedge \log C
\end{gather*}
and $Y_i$ and $\tilde{Y}_i$ are individual realizations of $Y$ and $\tilde{Y}$, respectively. Usually, the censoring time is not affected by treatment assignment and status so that it is reasonable to assume that $C^{(1)}=C^{(0)}=C$ (Bai, Tsiatis, and O'Brien, 2013), where $(C^{(1)},C^{(0)})$ are the potential outcomes for censoring under treatment and control. Let $w_0$ be a cutpoint for the forcing variable. We now discuss necessary assumptions for identification of causal effects in RD designs with censored data. 
\begin{enumerate}
\item $E(Y^{(1)} | W=w)$ and $E(Y^{(0)} | W=w)$ are continuous in $w$. 
\item $Y^{(0)}, Y^{(1)},W, Z$ are independent with $C$.
\item Let $S^{(1)}(t|w) = P(T^{(1)} > t|W=w)$ and $S^{(0)}(t|w) = P(T^{(0)} > t|W=w)$. $S^{(1)}(t|w)$ and $S^{(0)}(t|w)$ are continuous at $w$ for all $t$. Denote $S(t|w) = P(T > t|W=w)$.
\item Let $G(t) = P(C>t)$. $G(t)$ is continuous for all $t$.
\item For the fuzzy RD, $P(Z=1|W=w)$ is continuous at $w$ except for $w = w_0$ and $\lim_{w \downarrow w_0} P(Z=1|W=w) \ne \lim_{w \uparrow w_0} P(Z=1|W=w)$
\item For the fuzzy RD, $Z(w^*)$ is nondecreasing in $w^*$ at $w^*=w_0$ where $Z(w^*)$ is potential treatment status given point $w^*$ such that $w^*$ is neighborhood of $w_0$. 
\end{enumerate}
Condition 1 is a smoothness assumption for the mean potential outcome functions around the cutoff point $w_0$. Condition 2 is a typical assumption in survival analysis and is referred to as random censoring. Conditions 3 and 4 guarantee smoothness for the failure time and censoring distributions. 
Condition 5 is a standard assumption for fuzzy RD. Condition 6 states that potential treatment status is monotonic in the cutoff point (Imbens and Lemieux, 2008). For survival data, in the sharp RD design, the average causal effect of treatment given the forcing variable is $E(Y^{(1)} -Y^{(0)} | W=w)$. Under these assumptions,
\begin{gather*}
E\{Y^{(0)}|W=w_0\} = \lim_{w \uparrow w_0} E\{Y^{(0)}|Z = 0,W=w_0\} = \lim_{w \uparrow w_0} E(Y|W=w) \\
E\{Y^{(1)}|W=w_0\} = \lim_{w \downarrow w_0} E\{Y^{(1)}|Z = 1,W=w_0\} = \lim_{w \downarrow w_0} E(Y|W=w)
\end{gather*}
Then under a sharp RD, the average treatment effect is 
\begin{gather*}
\tau_{SRD} = \lim_{w \downarrow w_0} E[Y|W=w] - \lim_{w \uparrow w_0} E[Y|W=w]  
\end{gather*}
For a fuzzy RD, the average treatment effect is
\begin{gather*}
\tau_{FRD} = \frac{\lim_{w \downarrow w_0} E(Y|W=w) - \lim_{w \uparrow w_0} E(Y|W=w)}{\lim_{w \downarrow w_0} P(Z=1|W=w) - \lim_{w \uparrow w_0} P(Z=1|W=w)}
\end{gather*}

\section{Proposed Methodology}
\subsection{Censoring unbiased transformations} 
In sharp RD with uncensored data, we can use nonparametric estimation directly with the response variable. However, for censored data, it is very difficult to use response directly due to censoring. The issue is to find a transformation $q^*$ based on observed data such that $E\{q^*(O)\} = E(H(T)|W)$ where $H(\cdot)$ is a nondecreasing function of $T$. The transformation is called a censoring unbiased transformation (Fan and Gijbels, 1994; Rubin and Van der Laan, 2007; Steingrimsson, Diao, and Strawderman, 2019). One approach is to use an inverse probability censoring weighted (IPCW) method to obtain $E(H(T)|W)$, which is 
\begin{gather*}
H_{IPCW}(T) = \frac{\Delta H(T)}{G(T|W)}.
\end{gather*}
It is easy to show that $H_{IPCW}(T)$ is censoring unbiased transformation. However, this approach requires that the censoring distribution is correctly specified and this approach yields an inefficient estimator. Rubin and Van der Laan (2007) propose a doubly robust (henceforth DR) censoring unbiased transformation for the failure time. Steingrimsson, Diao, and Strawderman (2019) extend this approach to $H(T)$. Let 
\begin{gather*}
M_G(u|W) = I(\tilde{T} \leq u, \Delta=0) - \int_0^{u} I(\tilde{T} \geq s)\lambda_G(s|W)ds,
\end{gather*}
where $\lambda_G(s|W)$ is true hazard function of $G$ given covariate $W$. The form is
\begin{gather*}
H_{DR}(T) = \frac{\Delta H(T)}{G(T|W)} + \int_0^{\tilde{T}} \frac{Q_{H(T)}(u,W)}{G(u|W)}dM_G(u|W).
\end{gather*}
where $Q_{H(T)}(u,W)$ is $E\{H(T)|T > u,W\}$. This DR transformation requires estimation of both censoring and failure time distributions. It is a combination of an IPCW term and a mean zero martingale transform term. IPCW term is a special case of DR transformation. This martingale transformation term utilizes information from censored observations, which yields greater efficiency than using the IPCW approach only. Since it is also a censoring unbiased transformation, it guarantees that $E(H_{DR}(T)|W=w) = E(H(T)|W=w)$ for any $w$ (Steingrimsson, Diao, and Strawderman, 2019). Moreover, if either the censoring distribution or failure time distribution is correctly specified, then the estimator from DR transformation based on the estimated $Q$ and $G$ is statistically consistent for $E(H(T)|W)$. Moreover, if models for $Q$ and $G$ are both correctly specified, then the resulting estimator from the DR transformation is the most efficient estimator given class of the estimator of $E(H(T)|W)$ (Steingrimsson, Diao, and Strawderman, 2019). 
\subsection{Asymptotic Theory}
The key idea of using the transformation from Section 4.1. is to be able to use uncensored data techniques in the censored situation by applying the data-dependent transformation to the observed outcome. As can be seen in the last section, these transformations depend on the censoring distribution or both the censoring and failure time distributions. In this section, we discuss RD-based estimation procedures of the treatment effect and their asymptotic properties with censored data. Here, we take $H(\cdot) =\log (\cdot)$. and
\begin{gather*}
Y_{IPCW,i}(O_i;G) = \frac{\Delta_i Y_i}{G(T_i)} \quad Y_{DR,i}(O_i;G,S) = \frac{\Delta_i Y_i}{G(T_i)} + \int_0^{\tilde{T}_i} \frac{Q_Y(u,W_i)}{G(u)}dM_{G,i}(u),
\end{gather*}
where $Q_Y(\cdot,\cdot) = E(\log T|T \geq \cdot,\cdot)$ and
\begin{gather*}
M_{G,i}(u) = I(\tilde{T}_i \leq u, \Delta_i=0) - \int_0^{u} I(\tilde{T}_i \geq s)\lambda_G(s)ds.
\end{gather*}
We only focus on DR transformation. To obtain limits in the sharp and fuzzy RD designs, parametric methods are not very attractive because modeling and the discontinuity point(s) depend on the particular parametric distribution. Hahn, Todd and Van der Klaauw (2001) show that a causal effect in RD designs is nonparametrically estimable.  Now we apply the local linear regression method of Fan and Gijbels (1996). Let $K(\cdot)$ be a kernel function and $h$ be bandwidth. For fuzzy RD, we consider 
\begin{gather*}
U_{R,DR}^{FRD,Y}(\alpha_R^{Y},\beta_R^{Y};G,S) = \sum_{i=1}^n I(W_i \geq w_0)\{Y_{DR,i}(O_i;G,S) - \alpha_R^{(Y)} - \beta_R^{(Y)} (W_i-w_0)\}^2 K\bigg{(}\frac{W_i-w_0}{h}\bigg{)} \\
U_{L,DR}^{FRD,Y}(\alpha_L^{Y},\beta_L^{Y};G,S) = \sum_{i=1}^n I(W_i < w_0)\{Y_{DR,i}(O_i;G,S) - \alpha_L^{(Y)} - \beta_L^{(Y)} (W_i-w_0)\}^2 K\bigg{(}\frac{W_i-w_0}{h}\bigg{)} \\
U_R^{FRD,Z}(\alpha_R^Z,\beta_R^Z) = \sum_{i=1}^n I(W_i \geq w_0)\{Z_i - \alpha_R^Z - \beta_R^Z (W_i-w_0)\}^2 K\bigg{(}\frac{W_i-w_0}{h}\bigg{)} \\
U_L^{FRD,Z}(\alpha_L^Z,\beta_L^Z) = \sum_{i=1}^n I(W_i < w_0)\{Z_i - \alpha_L^Z - \beta_L^Z (W_i-w_0)\}^2 K\bigg{(}\frac{W_i-w_0}{h}\bigg{)}. 
\end{gather*}
{We can similarly define loss functions for IPCW transformation, say $U_{R,IPCW}^{FRD,Y}(\alpha_R^{Y},\beta_R^{Y};G)$ and $U_{L,IPCW}^{FRD,Y}(\alpha_L^{Y},\beta_L^{Y};G)$. In sharp RD designs, since the treatment assignment is deterministic, the loss functions with transformation responses are only necessary and they are identical to ones in fuzzy RD design.} \\
\indent {Let 
$\{\hat{\alpha}_{R,DR}^{FRD,Y}(G,S)$, $\hat{\beta}_{R,DR}^{FRD,Y}(G,S)$, $\hat{\alpha}_{L,DR}^{FRD,Y}(G,S)$, $\hat{\beta}_{L,DR}^{FRD,Y}(G,S)\}$ be estimators using DR transformation in fuzzy RD design, respectively. Furthermore, we define 
$\{\hat{\alpha}_R^{FRD,Z}$, $\hat{\beta}_R^{FRD,Z}$, $\hat{\alpha}_L^{FRD,Z}$, $\hat{\beta}_L^{FRD,Z}\}$ for the estimators of modeling treatment assignment. We can similarly define estimators for sharp RD designs, say 
$\{\hat{\alpha}_{R,DR}^{SRD,Y}(G,S)$, $\hat{\beta}_{R,DR}^{SRD,Y}(G,S)$, $\hat{\alpha}_{L,DR}^{SRD,Y}(G,S)$, $\hat{\beta}_{L,DR}^{SRD,Y}(G,S)\}$.}
Then we can derive estimators for the fuzzy and sharp RD designs: 
\begin{gather*}
\hat{\tau}_{FRD}^{DR}(G,S) = \frac{\hat{\alpha}_{R,DR}^{FRD,Y}(G) - \hat{\alpha}_{L,DR}^{FRD,Y}(G)}{\hat{\alpha}_R^{FRD,Z} - \hat{\alpha}_L^{FRD,Z}} \\
\hat{\tau}_{SRD}^{DR}(G,S) = \hat{\alpha}_{R,DR}^{SRD,Y}(G,S) - \hat{\alpha}_{L,DR}^{SRD,Y}(G,S).
\end{gather*} 
Note that we removed dependence on the bandwidth in the definition of these estimators; we thus assume that it is given. We will discuss how to estimate bandwidth formally in Section 4.3. In the next set of results, we show that under certain conditions, we can prove asymptotic convergence results for the sharp and fuzzy RD estimators. As can be seen, our estimators depend on $G$ and $S$. Let $G_0$ and $S_0$ be true distributions of failure and censoring times, respectively. Let $\hat{G}$ and $\hat{S}$ be estimated distributions of failure and censoring times, respectively. In the estimation, we correctly estimate censoring distribution while we may incorrectly estimate survival distribution. As discussed in the Supplementary Materials, we assume uniform consistency of $\hat{G}$ to $G_0$, and uniform consistency of $\hat{S}$ to $S^*$ where $S^*$ is possibly incorrect model of $S$. We will discuss the estimation of these two distributions in the next subsection. For these two theoretical results, the IPCW and DR estimators are asymptotically normal with some bias.
\begin{thm}
Assume that conditions (C1)-(C5), (R1)-(R9) hold. By Lemma 1-6 in the Supplementary Materials,  
\begin{gather*}
n^{2/5}(\hat{\tau}_{FRD}^{IPCW}(\hat{G}) - \tau_{FRD}- \varphi_{FRD}) \overset{d}{\longrightarrow} N(0,\Sigma_{FRD}^{IPCW}(G_0)) \\
n^{2/5}(\hat{\tau}_{FRD}^{DR}(\hat{G},\hat{S}) - \tau_{FRD}- \varphi_{FRD}) \overset{d}{\longrightarrow} N(0,\Sigma_{FRD}^{DR}(G_0,S^*)) 
\end{gather*}
where $\varphi_{FRD}$, $\Sigma_{FRD}^{IPCW}(G)$ and $\Sigma_{FRD}^{DR}(G,S)$ are defined in the Supplementary Materials.
\end{thm}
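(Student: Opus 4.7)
The plan is to reduce the theorem to a multivariate delta-method calculation applied to the joint asymptotic distribution of the four local linear regression intercepts that appear in the definitions of $\hat\tau_{FRD}^{IPCW}$ and $\hat\tau_{FRD}^{DR}$. Writing
\begin{gather*}
\hat\tau_{FRD}^{DR}(\hat G,\hat S) \;=\; \varphi\!\left(\hat\alpha_{R,DR}^{FRD,Y},\hat\alpha_{L,DR}^{FRD,Y},\hat\alpha_R^{FRD,Z},\hat\alpha_L^{FRD,Z}\right),
\end{gather*}
with $\varphi(a_R,a_L,p_R,p_L)=(a_R-a_L)/(p_R-p_L)$, one sees that all that is needed is (i) a joint $n^{2/5}$-rate central limit theorem for the vector of four intercepts, and (ii) a bias expansion of the same order. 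Condition (C5) guarantees that the denominator $p_R-p_L$ is bounded away from zero, so $\varphi$ is continuously differentiable at the limit point and the delta method applies.

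For step (i) I would invoke Lemmas 1--6 of the supplement. Lemmas 1--2 should furnish, for each side of the cutoff, a Bahadur-type expansion for local linear regression on the DR-transformed response of the form
\begin{gather*}
\hat\alpha_{R,DR}^{FRD,Y}(\hat G,\hat S) \;=\; \lim_{w\downarrow w_0}E(Y\mid W=w)\;+\;\tfrac{h^{2}}{2}\mu_2(K)\,m_R''(w_0)\;+\;\tfrac{1}{nh}\sum_{i}\psi_{R,i}^{DR}(G_0,S^*) + o_p(n^{-2/5}),
\end{gather*}
and symmetrically for the left-hand intercept; Lemma 3 presumably gives the analogous expansion for the simpler local linear regression of $Z_i$ on $W_i$ (uncensored). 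Lemmas 4--5 would handle the crucial nuisance-plug-in step, showing that replacing $(G_0,S^*)$ by $(\hat G,\hat S)$ only contributes an $o_p((nh)^{-1/2})=o_p(n^{-2/5})$ term, exploiting the uniform consistency hypotheses and the censoring-unbiased identity $E[Y_{DR}(O;G_0,S)\mid W]=E[Y\mid W]$ for \emph{any} $S$. Lemma 6 should then assemble these four expansions into the joint asymptotic normality of the vector at rate $n^{2/5}$, with a covariance matrix whose block structure records both the within-transformation variance and the across-side independence (since the kernel weights on $\{W_i\geq w_0\}$ and $\{W_i<w_0\}$ have disjoint support).

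For step (ii) I would identify $\varphi_{FRD}$ with the first-order delta-method image of the bias vector. Specifically, letting $\nabla\varphi$ denote the gradient of $\varphi$ at the population limit and $b=(b_R^Y,b_L^Y,b_R^Z,b_L^Z)$ the vector of $h^2$-bias coefficients assembled above, $\varphi_{FRD}=h^{2}\nabla\varphi^{\top}b$, which under the choice $h=c n^{-1/5}$ is $O(n^{-2/5})$ and hence persists in the limit after multiplication by $n^{2/5}$. The asymptotic variance $\Sigma_{FRD}^{DR}(G_0,S^*)$ is then $\nabla\varphi^{\top}\,V_{DR}\,\nabla\varphi$ where $V_{DR}$ is the limiting covariance of the stochastic piece. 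The IPCW statement is treated identically, just by deleting the martingale augmentation term from the influence function, yielding $\Sigma_{FRD}^{IPCW}(G_0)$.

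The step I expect to be the real obstacle is controlling the nuisance plug-in, i.e.\ verifying that $\hat\alpha_{R,DR}^{FRD,Y}(\hat G,\hat S) - \hat\alpha_{R,DR}^{FRD,Y}(G_0,S^*) = o_p(n^{-2/5})$, because the integrand $Q_Y(u,W)/G(u)$ depends on both nuisances inside a stochastic integral against $dM_G$. I would handle this by expanding around $(G_0,S^*)$, bounding the two first-order remainder terms using the uniform convergence rates of $\hat G$ and $\hat S$ together with the double-robustness identity (which kills one of them when $G_0$ is consistently estimated), and controlling the second-order cross term by Cauchy--Schwarz together with an envelope condition on $Q_Y/G$ on the kernel support $[w_0-h,w_0+h]$. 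Once this is established, combining the four expansions and applying the delta method delivers the claimed limits.
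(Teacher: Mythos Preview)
Your proposal is correct and follows essentially the same route as the paper: obtain joint $n^{2/5}$-rate asymptotic normality (with explicit $h^2$ bias) for the four local-linear intercepts via Lemmas~1--6, then apply the multivariate delta method to the ratio $\varphi(a_R,a_L,p_R,p_L)=(a_R-a_L)/(p_R-p_L)$. Two small corrections: the nondegeneracy of the denominator comes from assumption~5 in Section~3, not condition~(C5) of the supplement (which is an integrability hypothesis on $D_3$); and the division of labor among the lemmas is slightly different from your guesses---Lemmas~2--3 handle the bias, Lemma~4 the variance, Lemma~5 the Lyapunov CLT, and Lemma~6 assembles them, with the nuisance plug-in $(\hat G,\hat S)\to(G_0,S^*)$ threaded through Lemmas~2--5 via (C6)--(C7) and (R9) rather than being isolated in Lemmas~4--5.
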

\noindent For the sharp RD case, the result follows easily from Theorem 1 because there is no need to model $Z|W$ for fuzzy RD. 
\begin{corollary}
Suppose that conditions (C1)-(C5) and (R1)-(R9) in the Supplementary Materials hold. By Lemma 1-6 and Theorem 1 in the Supplementary Materials,  
\begin{gather*}
n^{2/5}(\hat{\tau}_{SRD}^{IPCW}(\hat{G}) - \tau_{SRD}- \varphi_{SRD}) \overset{d}{\longrightarrow} N(0,\Sigma_{SRD}^{IPCW}(G_0)) \\
n^{2/5}(\hat{\tau}_{SRD}^{DR}(\hat{G},\hat{S}) - \tau_{SRD}- \varphi_{SRD}) \overset{d}{\longrightarrow} N(0,\Sigma_{SRD}^{DR}(G_0,S^*)) 
\end{gather*}
where $\varphi_{SRD}$, $\Sigma_{SRD}^{IPCW}(G)$ and $\Sigma_{SRD}^{DR}(G,S)$ are defined in the Supplementary Materials.
\end{corollary}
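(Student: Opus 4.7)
The plan is to derive the corollary as a direct byproduct of the proof of Theorem 1, exploiting that the sharp RD estimator is structurally identical to the numerator of the fuzzy RD estimator. First I would note, as the excerpt already observes, that the loss functions $U_{R,DR}^{FRD,Y}$ and $U_{L,DR}^{FRD,Y}$ depend only on the transformed response $Y_{DR,i}(O_i;G,S)$, the forcing variable, the cutoff, the kernel and the bandwidth, and make no reference to $Z_i$, so they coincide with those used in the sharp RD design. Consequently the minimisers agree, i.e.\ $\hat{\alpha}_{R,DR}^{SRD,Y}(\hat{G},\hat{S}) = \hat{\alpha}_{R,DR}^{FRD,Y}(\hat{G},\hat{S})$ and similarly on the left, giving
\begin{gather*}
\hat{\tau}_{SRD}^{DR}(\hat{G},\hat{S}) = \hat{\alpha}_{R,DR}^{FRD,Y}(\hat{G},\hat{S}) - \hat{\alpha}_{L,DR}^{FRD,Y}(\hat{G},\hat{S}).
\end{gather*}
The IPCW case is parallel after replacing $Y_{DR,i}$ by $Y_{IPCW,i}$ and $(\hat{G},\hat{S})$ by $\hat{G}$.

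Next I would read off the asymptotic distribution of this numerator from the chain of arguments that yields Theorem 1. That proof must establish, via Lemmas 1--6 of the Supplementary Materials, a joint $n^{2/5}$-rate central limit theorem for the pair $(\hat{\alpha}_{R,DR}^{FRD,Y} - \hat{\alpha}_{L,DR}^{FRD,Y},\, \hat{\alpha}_R^{FRD,Z} - \hat{\alpha}_L^{FRD,Z})$ and then apply the delta method through the ratio functional. Taking the first marginal of this joint CLT immediately produces a univariate CLT at the same rate for the quantity displayed above. The bias $\varphi_{SRD}$ is precisely the leading $O(h^2)$ local polynomial bias of the one-sided local linear fits to $E(Y_{DR}\mid W=w)$ (resp.\ $E(Y_{IPCW}\mid W=w)$) on each side of $w_0$, and the variance $\Sigma_{SRD}^{DR}(G_0,S^*)$ is the diagonal block of the fuzzy asymptotic covariance corresponding to the $Y$-intercepts, which is well defined because the $Z$-regression fit simply does not enter the sharp expression.

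Finally I would verify that the centering equals $\tau_{SRD}$. Under Condition 2, the DR transformation is censoring-unbiased whenever $\hat{G}$ is consistent for $G_0$, even if the working survival model only converges to some $S^*$, so $E\{Y_{DR,i}(O_i;G_0,S^*)\mid W=w\} = E(Y\mid W=w)$; the same property holds for $Y_{IPCW,i}$ with $\hat{G}\to G_0$. Combined with the sharp-RD identification display in Section 3, the local linear intercepts target $\lim_{w \downarrow w_0} E(Y\mid W=w)$ and $\lim_{w \uparrow w_0} E(Y\mid W=w)$, whose difference is exactly $\tau_{SRD}$. The main obstacle is really bookkeeping: aligning the explicit forms of $\varphi_{SRD}$ and $\Sigma_{SRD}^{DR}(G_0,S^*)$ with the formulas built up in the Supplementary Materials for the fuzzy case, and confirming that no cross-covariance with the $Z$-regression survives when passing from fuzzy to sharp. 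Once that alignment is made, the conclusion of the corollary is immediate from Theorem 1; no new lemmas are required.
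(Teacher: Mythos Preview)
Your proposal is correct and follows essentially the same approach as the paper: the paper's proof of the corollary is the single sentence ``From Theorem 1, we can derive Corollary 1 because the corollary is a special case of Theorem 1,'' and your argument spells out precisely how that specialization works by extracting the $Y$-component of the joint CLT established in the proof of Theorem 1. Your additional discussion of the censoring-unbiasedness to verify the centering and the observation that the $Z$-regression terms drop out are exactly the bookkeeping needed to make the paper's one-line remark rigorous.
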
 

\indent Now we demonstrate an efficiency result similar to that given in Steingrimsson, Diao, and Strawderman (2019) for a separate problem.  For the sharp RD estimator, due to the `local randomization' result, the DR estimator with true censoring and failure time distributions is more efficient estimator than the estimators with IPCW and DR transformations, which only involve true censoring distribution. For the fuzzy RD estimator, to obtain a similar result, we should account for correlation between the transformed responses and $Z$. Suppose that 
\begin{gather*}
\tau^Y = \lim_{w \downarrow w_0} E(Y|W=w) - \lim_{w \uparrow w_0} E(Y|W=w) \quad \tau^Z = \lim_{w \downarrow w_0} P(Z=1|W=w) - \lim_{w \uparrow w_0} P(Z=1|W=w)
\end{gather*}
We need two conditions to hold: i) $\tau^Y$ and $\tau^Z$ are positive ii) $Cov(Y_{DR}(O;G_0,S_0),Z)$ is smaller than $Cov(Y_{IPCW}(O;G_0),Z)$ and $Cov(Y_{DR}(O;G_0,S),Z)$ at the boundary points. Let 
\begin{gather*}
\hspace*{-10mm}
\eta_{DR}^+(w_0;G,S) = \lim_{w \downarrow w_0} Cov(Y_{DR}(O;G,S), Z|W=w) \quad \eta_{DR}^-(w_0;G,S) = \lim_{w \uparrow w_0} Cov(Y_{DR}(O;G,S), Z|W=w) \\
\hspace*{-10mm}
\eta_{IPCW}^+(w_0;G) = \lim_{w \downarrow w_0} Cov(Y_{IPCW}(O;G), Z|W=w) \quad \eta_{IPCW}^-(w_0;G) = \lim_{w \uparrow w_0} Cov(Y_{IPCW}(O;G), Z|W=w).
\end{gather*}
We formally state these in the theorem below:    
\begin{thm}
Suppose that conditions (C1)-(C5) and (R1)-(R9) in the Supplementary Materials hold. Let $AVar$ denote asymptotic variance. Then for the sharp RD estimator,
\begin{gather*}
AVar(\hat{\tau}_{SRD}^{DR}(G_0,S_0)) \leq \min\{AVar(\hat{\tau}_{SRD}^{IPCW}(G_0)),AVar(\hat{\tau}_{SRD}^{DR}(G_0,S))\}.
\end{gather*}
For fuzzy RD estimator, suppose that $\tau^Y > 0$ and $\tau^Z > 0$, and if 
\begin{gather*}
\eta_{DR}^{+}(w_0;G_0,S_0) \leq \min\{\eta_{IPCW}^{+}(w_0;G_0),\eta_{DR}^{+}(w_0;G_0,S)\} \\
\eta_{DR}^{-}(w_0;G_0,S_0) \leq \min\{\eta_{IPCW}^{-}(w_0;G_0),\eta_{DR}^{-}(w_0;G_0,S)\},
\end{gather*} 
then
\begin{gather*}
AVar(\hat{\tau}_{FRD}^{DR}(G_0,S_0)) \leq \min\{AVar(\hat{\tau}_{FRD}^{IPCW}(G_0)),AVar(\hat{\tau}_{FRD}^{DR}(G_0,S))\}.
\end{gather*}
\end{thm}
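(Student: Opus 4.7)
The plan is to reduce each inequality to a pointwise comparison, at the cutoff $w_0$, of the (conditional) variances or covariances of the censoring-unbiased transformations, and then invoke the efficiency property of the DR transformation established by Steingrimsson, Diao, and Strawderman (2019): conditionally on $W=w$, the DR transformation with the true pair $(G_0,S_0)$ has the smallest variance among all censoring unbiased transformations of $\log T$. This pointwise ordering propagates to the asymptotic variances because, by the local-linear theory underlying Theorem 1, each Fan--Gijbels intercept at a boundary point has an asymptotic variance proportional to the one-sided limit of the conditional variance of its transformed response.

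\textbf{Sharp RD.} From Corollary 1, I would first write $\Sigma_{SRD}^{*}(\cdot)=c_K[V^{+}_{*}+V^{-}_{*}]$, where $V^{\pm}_{*}=\lim_{w\to w_0^{\pm}}\mathrm{Var}\{Y_{*}(O;\cdot)|W=w\}$ and $*\in\{IPCW,DR\}$, with $c_K$ absorbing the kernel-dependent constants $\int K^2/f_W(w_0)$. The CUT efficiency lemma applied conditionally on $W=w$ then gives, for every $w$,
\begin{gather*}
\mathrm{Var}\{Y_{DR}(O;G_0,S_0)|W=w\}\;\le\;\min\{\mathrm{Var}\{Y_{IPCW}(O;G_0)|W=w\},\;\mathrm{Var}\{Y_{DR}(O;G_0,S)|W=w\}\}.
\end{gather*}
Passing to the one-sided limits using the continuity Conditions 3--4 yields $V^{\pm}_{DR}(G_0,S_0)\le\min\{V^{\pm}_{IPCW}(G_0),V^{\pm}_{DR}(G_0,S)\}$, and summing over $\pm$ gives the sharp RD conclusion.

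\textbf{Fuzzy RD.} Write $\hat\tau_{FRD}^{DR}(G,S)=\hat N/\hat D$ with $\hat N=\hat\alpha_{R,DR}^{FRD,Y}-\hat\alpha_{L,DR}^{FRD,Y}$ and $\hat D=\hat\alpha_R^{FRD,Z}-\hat\alpha_L^{FRD,Z}$. A joint CLT for $(\hat N,\hat D)$, obtained by stacking the local-linear score equations for $Y_{DR}$ and for $Z$ on the right and left windows and applying the same Fan--Gijbels expansion that supports Theorem 1, combined with the delta method, yields
\begin{gather*}
AVar(\hat\tau_{FRD}^{*})=\frac{AVar(\hat N)}{(\tau^Z)^{2}}-\frac{2\tau^Y\,ACov(\hat N,\hat D)}{(\tau^Z)^{3}}+\frac{(\tau^Y)^{2}\,AVar(\hat D)}{(\tau^Z)^{4}}.
\end{gather*}
The third term is the same for all three candidate estimators, and the first term reduces to the sharp-RD comparison. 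Because the right and left local fits use disjoint subsamples, $ACov(\hat N,\hat D)$ decomposes into right and left pieces each proportional to $\eta^{\pm}_{*}(w_0;\cdot)$, so the hypothesised ordering of the $\eta$'s under $\tau^Y,\tau^Z>0$ is precisely the condition needed to make the middle term also favour the DR estimator with true $(G_0,S_0)$; combining the three term-by-term comparisons gives the fuzzy RD conclusion.

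\textbf{Main obstacle.} The sharp-RD case is essentially a boundary reformulation of the CUT efficiency lemma and should be routine. The substantive work sits in the fuzzy-RD case: I would have to establish a joint asymptotic normal limit for $(\hat N,\hat D)$ with an explicit covariance structure, and then verify that the delta-method expansion together with the $\eta$-ordering and the positivity of $(\tau^Y,\tau^Z)$ does indeed deliver the claimed inequality in the same direction as in the sharp-RD part. The CUT efficiency lemma controls the variance of the $Y$-transformation but says nothing about its covariance with the auxiliary variable $Z$, so the extra hypothesis on the $\eta$'s is imposed precisely to close this gap; keeping track of signs and of the bias corrections $\varphi_{FRD}$ across the three transformations, and checking that the covariance term does not overwhelm the variance advantage of the DR estimator, is where the bulk of the bookkeeping will go.
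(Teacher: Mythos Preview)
Your proposal is correct and follows essentially the same route as the paper: for the sharp RD you reduce to the pointwise conditional-variance ordering of the transformations (the paper makes this concrete via the explicit Suzukawa (2004)/Steingrimsson--Diao--Strawderman formulae for $\sigma_{IPCW}^2(w;G_0)$, $\sigma_{DR}^2(w;G_0,S)$, $\sigma_{DR}^2(w;G_0,S_0)$ rather than citing the CUT efficiency lemma abstractly), and for the fuzzy RD you do the delta-method three-term decomposition and compare term by term using the assumed $\eta$-ordering, exactly as the paper does. Two small overestimates of difficulty on your part: the joint CLT for $(\hat N,\hat D)$ that you flag as the ``substantive work'' is already the content of Theorem~1 and Lemmas~5--6, so Theorem~2 is purely a plug-in into the explicit $\Sigma_{FRD}$ formula; and the bias $\varphi_{FRD}$ depends only on $\mu''^{\pm}(w_0)$ and $p''^{\pm}(w_0)$, hence is identical across IPCW and DR and drops out of the variance comparison entirely.
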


\subsection{Bandwidth selection}
\indent Once we estimate the censoring unbiased transformation with respect to failure time, we can then apply the existing methodology for RD designs with censored data. The estimated transformation is
\begin{gather*}
\hat{Y}_{IPCW,i}(O_i;\hat{G}) = \frac{\Delta_i \tilde{Y}_i}{\hat{G}(\tilde{T}_i)} \\
\hat{Y}_{DR,i}(O_i;\hat{G},\hat{S}) = \frac{\Delta_i \tilde{Y}_i}{\hat{G}(\tilde{T}_i)} +  \int_0^{\tilde{T}_i} \frac{\hat{Q}_Y(u,W_i)}{\hat{G}(u)}d\hat{M}_{G,i}(u),
\end{gather*}
where $\hat{Q}_Y(\cdot|\cdot)$ is estimator of $Q_Y(\cdot|\cdot)$ and 
\begin{gather*}
\hat{M}_{G,i}(u|W_i) =  I(\tilde{T}_i \leq u, \Delta_i=0) - \int_0^{u} I(\tilde{T}_i \geq s)d\hat{\Lambda}_G(s),
\end{gather*}
and $\hat{\Lambda}_G(\cdot)$ is Nelson-Aalen estimator for $G$. Details of computation of $\hat{G}$ and $\hat{Q}_Y(\cdot,\cdot)$ is found in the Supplementary Materials.\\
\indent In standard nonparametric regression, one important issue is bandwidth selection. 
Ludwig and Miller (2005) and Ludwig and Miller (2007) propose a mean squared error (MSE) type cross-validation criterion. Let $\hat{a}_L(W,\xi,L)$ be $\xi$ quantile of the empirical distribution of $W$ using observations $W_i < w_0$ and let $\hat{a}_R(W,1-\xi)$ be $1-\xi$ quantile of the empirical distribution of $W$ using observations $W_i \geq w_0$. Moreover, let $\hat{\alpha}_L^{(Y)}$ and $\hat{\alpha}_R^{(Y)}$ be estimated parameters for $\alpha_L$ and $\alpha_R$. Criterion from Ludwig and Miller (2005) and Ludwig and Miller (2007) for uncensored data is    
\begin{gather*}
\frac{1}{n} \sum_{\hat{a}_L(W,\xi) \leq W_i \leq \hat{a}_R(W,1-\xi)}(Y_i - \hat{\gamma}(W_i))^2,   
\end{gather*}
where 
\[
				\hat{\gamma}(w) =
				\begin{cases}
				\hat{\alpha}_L(w) & \text{if} \ w < w_0\\
				\hat{\alpha}_R(w) & \text{if} \ w \geq w_0
				\end{cases}
				.
				\]
This criterion still works for the unbiased censoring transformation because it uses a MSE-type criterion and does not depend on variance of $Y$. We now modify the proposal of Ludwig and Miller (2005) and Ludwig and Miller (2007) (henceforth LM) for censored data. For the sharp RD estimator, we consider
\begin{gather*}
\hspace*{-10mm}
(\hat{\alpha}_R^{Y}(w;\hat{G},\hat{S}),\hat{\beta}_R^{Y}(w;\hat{G},\hat{S})) = \argmin \sum_{i=1}^n I(W_i \geq w)\{Y_{DR,i}(O_i;\hat{G},\hat{S}) - \alpha_R^{Y} - \beta_R^{Y} (W_i-w)\}^2 K\bigg{(}\frac{W_i-w_0}{h}\bigg{)} \\
\hspace*{-10mm}
(\hat{\alpha}_L^{Y}(w;\hat{G},\hat{S}),\hat{\beta}_L^{Y}(w;\hat{G},\hat{S})) = \argmin \sum_{i=1}^n I(W_i < w)\{Y_{DR,i}(O_i;\hat{G},\hat{S}) - \alpha_L^{Y} - \beta_L^{Y} (W_i-w)\}^2 K\bigg{(}\frac{W_i-w_0}{h}\bigg{)}. 
\end{gather*}
Then the LM criterion for sharp RD estimator in censored data is given by 
\begin{gather*}
CV_{Y_{DR}}(h;\hat{G},\hat{S}) = \frac{1}{n} \sum_{\hat{a}_L(W,\xi) \leq W_i \leq \hat{a}_R(W,1-\xi)}(\hat{Y}_{DR,i}(O_i;\hat{G},\hat{S}) - \hat{\gamma}_{DR}^Y(W_i))^2, 
\end{gather*}
where 
\[
				\hat{\gamma}_{DR}^Y(w) =
				\begin{cases}
				\hat{\alpha}_{L,DR}^{Y}(w;\hat{G},\hat{S}) & \text{if} \ w < w_0\\
				\hat{\alpha}_{R,DR}^{Y}(w;\hat{G},\hat{S}) & \text{if} \ w \geq w_0 
				\end{cases}
				.
				\]
\noindent We then choose $\hat{h}_{DR}(\hat{G},\hat{S}) = \argmin\limits_{h} CV_{Y_{DR}}(h;\hat{G},\hat{S})$. We can derive a similar quantity for $\hat{Y}_{IPCW,i}(O_i;\hat{G}),i=1,\ldots,n$. For the fuzzy RD estimator, we define
\begin{gather*}
(\hat{\alpha}_R^Z(w),\hat{\beta}_R^Z(w)) = \argmin_{\alpha_R^Z,\beta_R^Z} \sum_{i=1}^n I(W_i \geq w)\{Z_i - \alpha_R^Z - \beta_R^Z (W_i-w)\}^2 K\bigg{(}\frac{W_i-w}{h}\bigg{)} \\
(\hat{\alpha}_L^Z(w),\hat{\beta}_L^Z(w)) = \argmin_{\alpha_L^Z,\beta_L^Z} \sum_{i=1}^n I(W_i < w)\{Z_i - \alpha_L^Z - \beta_L^Z (W_i-w)\}^2 K\bigg{(}\frac{W_i-w}{h}\bigg{)}. 
\end{gather*}
Then we obtain  
\begin{gather*}
CV_{Z}(h) = \frac{1}{n} \sum_{\hat{a}_L(W,\xi) \leq W_i \leq \hat{a}_R(W,1-\xi)}(Z_i - \hat{\gamma}_{Z}(W_i))^2,   
\end{gather*}
where 
\[
				\hat{\gamma}_{Z}(w) =
				\begin{cases}
				\hat{\alpha}_{L}^{Z}(w) & \text{if} \ w < w_0\\
				\hat{\alpha}_{R}^{Z}(w) & \text{if} \ w \geq w_0
				\end{cases}
				.
	\]
Then we obtain $\hat{h}_Z = \argmin\limits_{h} CV_{Z}(h) $. A smaller bandwidth is preferable to reduce the bias of the estimator. Hence for the fuzzy RD estimator, we consider $\min \{\hat{h}_{IPCW}(\hat{G}),\hat{h}_Z\}$ (IPCW) and $\min \{\hat{h}_{DR}(\hat{G},\hat{S}), \hat{h}_{Z}\}$ (DR).
\subsection{Variance estimation}
With the estimation of the censoring and failure time distributions and bandwidth selection, we obtain
\begin{gather*}
(\hat{\alpha}_{R,DR,\hat{h}}^{FRD,Y}(\hat{G},\hat{S}),\hat{\beta}_{R,DR,\hat{h}}^{Y}(\hat{G},\hat{S})) = \argmin_{\alpha_R^{Y},\beta_R^{Y}} U_{R,DR,\hat{h}}^{FRD,Y}(\alpha_R^{Y},\beta_R^{Y};\hat{G},\hat{S}) \\
(\hat{\alpha}_{L,DR,\hat{h}}^{FRD,Y}(\hat{G},\hat{S}),\hat{\beta}_{L,DR,\hat{h}}^{Y}(\hat{G},\hat{S})) = \argmin_{\alpha_L^{Y},\beta_L^{Y}} U_{L,DR,\hat{h}}^{FRD,Y}(\alpha_L^{Y},\beta_L^{Y};\hat{G},\hat{S}) \\
(\hat{\alpha}_{R,\hat{h}}^{FRD,Z},\hat{\beta}_{R,\hat{h}}^{FRD,Z}) = \argmin_{\alpha_R^Z,\beta_R^Z} U_{R,\hat{h}}^{FRD,Z}(\alpha_R^Z,\beta_R^Z) \quad (\hat{\alpha}_L^Z,\hat{\beta}_L^Z) = \argmin_{\alpha_L^Z,\beta_L^Z} U_{L,\hat{h}}^{FRD,Z}(\alpha_L^Z,\beta_L^Z), 
\end{gather*}
{where $U_{R,DR,\hat{h}}^{FRD,Y}$ and $U_{R,\hat{h}}^{FRD,Z}$ correspond to $U_{R,DR}^{FRD,Y}$ and $U_{R}^{FRD,Z}$ with an estimated bandwidth. We can define similarly for estimation functions with IPCW transformation. Estimators for sharp RD designs can be similarly defined.} Hence the proposed fuzzy RD and sharp RD estimators based on $\hat{G}$ and $\hat{S}$ are 
\begin{gather*}
\hat{\tau}_{FRD,\hat{h}}^{DR}(\hat{G},\hat{S}) = \frac{\hat{\alpha}_{R,DR,\hat{h}}^{FRD,Y}(\hat{G},\hat{S}) - \hat{\alpha}_{L,DR,\hat{h}}^{FRD,Y}(\hat{G},\hat{S})}{\hat{\alpha}_{R,\hat{h}}^Z - \hat{\alpha}_{L,\hat{h}}^Z} \\
\hat{\tau}_{SRD,\hat{h}}^{DR}(\hat{G},\hat{S}) = \hat{\alpha}_{R,DR,\hat{h}}^{SRD,Y}(\hat{G},\hat{S}) - \hat{\alpha}_{L,DR,\hat{h}}^{SRD,Y}(\hat{G},\hat{S}).
\end{gather*}
{where 
$\{\hat{\alpha}_{R DR,\hat{h}}^{SRD,Y}(\hat{G},\hat{S})$, $\hat{\alpha}_{L,DR,\hat{h}}^{SRD,Y}(\hat{G},\hat{S})\}$ are sharp RD estimators with estimated bandwidth from DR transformation.} For variance estimation, one may use the methods based on the asymptotic results in Section 4.2. Let $e_1 = (1,0)^T$ and $a(u) = (1,u)^T$. Define $g(\cdot)$ to common density of $W_i$. Using the expressions in the Supplementary Materials, the asymptotic variance of $\hat{\tau}_{SRD}^{DR}(\hat{G},\hat{S})$ can be expressed as
\begin{gather*}
\frac{1}{nh} \{g(w_0)\}^{-1}\{e_1^T(\sigma_{DR}^{2+}(w_0;G_0,S^*)\mathbf{\Gamma}^{-1}\boldsymbol{\vartheta}\mathbf{\Gamma}^{-1} + \sigma_{DR}^{2-}(w_0;G_0,S^*)\mathbf{\Gamma}^{-1}\boldsymbol{\vartheta}\mathbf{\Gamma}^{-1})e_1\}[1+o_p(1)], 
\end{gather*}
where 
\begin{gather*} 
\boldsymbol{\vartheta} = \int_0^{\infty} \{K(u)\}^2a(u)a^T(u) \quad \mathbf{\Gamma} = \int_0^{\infty} K(u)a(u)a^T(u) \\
\sigma_{DR}^{2+}(w_0;G,S) = \lim_{\epsilon \downarrow w_0} Var(Y_{DR}(O;G,S)|W=w) \quad \sigma_{DR}^{2-}(w_0;G,S) = \lim_{\epsilon \uparrow w_0} Var(Y_{DR}(O;G,S)|W=w).
\end{gather*}
Let $\mathbf{X}_h$ be $n \times 2$ matrix with the first column being 1 and the second column being $\dfrac{W_i-w_0}{h}, i=1,\ldots,n$. Moreover, let $\mathbf{W}_{h+}$ and $\mathbf{W}_{h-}$ be diagonal matrix with diagonal elements with $I(W_1 \geq w_0)K\bigg{(}\dfrac{W_i - w_0}{h}\bigg{)}$ and $I(W_1 < w_0)K\bigg{(}\dfrac{W_i - w_0}{h}\bigg{)}, i=1\ldots,n$, respectively.
Define 
\begin{gather*}
\mathbf{\Gamma}_{h+} = \mathbf{X}_h^T\mathbf{W}_{h+}\mathbf{X}_h \quad \mathbf{\Gamma}_{h-} = \mathbf{X}_h^T\mathbf{W}_{h-}\mathbf{X}_h.
\end{gather*}
By Calonico, Cattaneo and Titiunik (2014),  
\begin{gather*}
\frac{1}{nh} \{g(w_0)\}^{-1}\{e_1^T(\sigma_{DR}^{2+}(w_0;G_0,S^*)\mathbf{\Gamma}^{-1}\boldsymbol{\vartheta}\mathbf{\Gamma}^{-1} + \sigma_{DR}^{2-}(w_0;G_0,S^*)\mathbf{\Gamma}^{-1}\boldsymbol{\vartheta}\mathbf{\Gamma}^{-1})e_1\}[1+o_p(1)]  \\
=  \frac{1}{n}e_1^T(\boldsymbol{\Gamma}_{h+}^{-1}\boldsymbol{\phi}_{YY+,DR}\boldsymbol{\Gamma}_{h+}^{-1} + \boldsymbol{\Gamma}_{h-}^{-1}\boldsymbol{\phi}_{YY-,DR}\boldsymbol{\Gamma}_{h-}^{-1})e_1, 
\end{gather*}
where
\begin{gather*}
\boldsymbol{\phi}_{YY+,DR} = \frac{1}{n}\sum_{i=1}^n I(W_i \geq w_0) K\bigg{(}\frac{W_i-w_0}{h}\bigg{)}K\bigg{(}\frac{W_i-w_0}{h}\bigg{)}b_{i}b_{i}^T\sigma_{DR,(1)}^{2}(W_i;G_0,S^*) \\
\boldsymbol{\phi}_{YY-,DR} = \frac{1}{n}\sum_{i=1}^n I(W_i < w_0) K\bigg{(}\frac{W_i-w_0}{h}\bigg{)}K\bigg{(}\frac{W_i-w_0}{h}\bigg{)}b_{i}b_{i}^T\sigma_{DR,(0)}^{2}(W_i;G_0,S^*),
\end{gather*} 
where $b_i = (1,h^{-1}(W_i-w_0))^T$ and 
\begin{gather*}
\sigma_{DR,(1)}^{2}(w;G_0,S^*) = Var\bigg{\{}\frac{\Delta Y(1)}{G_0(T)} + \int_0^{\infty} \frac{Q_{Y(1)}(u,W;S^*)}{G_0(u)}dM_{G}(u)\bigg{|}W=w\bigg{\}} \\
\sigma_{DR,(0)}^{2}(w;G_0,S^*) = Var\bigg{\{}\frac{\Delta Y(0)}{G_0(T)} + \int_0^{\infty} \frac{Q_{Y(0)}(u,W;S^*)}{G_0(u)}dM_{G}(u)\bigg{|}W=w\bigg{\}}. 
\end{gather*}
where $Q_{Y(k)}(u,W;S^*) = E_{S^*}(Y(k)|W,T \geq u), k=0, 1$. We can obtain a similar result for the IPCW estimator. The first approach is to use plug-in residuals for the sandwich variance estimator.  Let $\hat{\epsilon}_{Y+,DR,i} = \hat{Y}_{DR,i}(O_i;\hat{G},\hat{S}) - \hat{\alpha}_{R,DR,\hat{h}}^{SRD,Y}(\hat{G},\hat{S})$ and $\hat{\epsilon}_{Y-,DR,i} = \hat{Y}_{DR,i}(O_i;\hat{G},\hat{S}) - \hat{\alpha}_{L,DR,\hat{h}}^{SRD,Y}(\hat{G},\hat{S})$. 
Then one may wish to use
\begin{gather*}
\hat{\boldsymbol{\phi}}_{YY+,DR,\hat{h}}^{pir} = \frac{1}{n}\sum_{i=1}^n I(W_i \geq w_0) K\bigg{(}\frac{W_i-w_0}{\hat{h}}\bigg{)}K\bigg{(}\frac{W_i-w_0}{\hat{h}}\bigg{)}b_{i}b_{i}^T\hat{\epsilon}_{Y+,DR,i}^2 \\
\hat{\boldsymbol{\phi}}_{YY-,DR,\hat{h}}^{pir} = \frac{1}{n}\sum_{i=1}^n I(W_i < w_0) K\bigg{(}\frac{W_i-w_0}{\hat{h}}\bigg{)}K\bigg{(}\frac{W_i-w_0}{\hat{h}}\bigg{)}b_{i}b_{i}^T\hat{\epsilon}_{Y-,DR,i}^2.
\end{gather*}
The second approach is to use a nonparametric nearest neighborhood (NN) variance estimator as in Calonico, Cattaneo and Titiunik (2014). This approach is advantageous in that it does not require nonparametric smoothing and is robust (Abadie and Imbens, 2006). Mimicking the approach from Abadie and Imbens (2006) and Calonico, Cattaneo and Titiunik (2014),
\begin{gather*}
\hat{\boldsymbol{\phi}}_{YY+,DR,\hat{h}}^{rb} = \frac{1}{n}\sum_{i=1}^n I(W_i \geq w_0) K\bigg{(}\frac{W_i-w_0}{\hat{h}}\bigg{)}K\bigg{(}\frac{W_i-w_0}{\hat{h}}\bigg{)}b_{i}b_{i}^T\hat{\sigma}_{YY+,DR}^2(W_i) \\
\hat{\boldsymbol{\phi}}_{YY-,DR,\hat{h}}^{rb} = \frac{1}{n}\sum_{i=1}^n I(W_i < w_0) K\bigg{(}\frac{W_i-w_0}{\hat{h}}\bigg{)}K\bigg{(}\frac{W_i-w_0}{\hat{h}}\bigg{)}b_{i}b_{i}^T\hat{\sigma}_{YY-,DR}^2(W_i),
\end{gather*}
where 
\begin{gather*}
\hat{\sigma}_{YY+,DR}^2(W_i) = I(W_i \geq w_0)\frac{K}{K+1}\bigg{(}\hat{Y}_i^{DR}(O_i;\hat{G},\hat{S}) - \frac{1}{K}\sum_{k=1}^K \hat{Y}_{l_{+,k}(i)}^{DR}(O_i;\hat{G},\hat{S}) \bigg{)}^2 \\
\hat{\sigma}_{YY-,DR}^2(W_i) = I(W_i < w_0)\frac{K}{K+1}\bigg{(}\hat{Y}_i^{trans}(O_i;\hat{G},\hat{S}) - \frac{1}{K}\sum_{k=1}^K \hat{Y}_{l_{-,k}(i)}^{DR}(O_i;\hat{G},\hat{S}) \bigg{)}^2,
\end{gather*}
where $\hat{Y}_{l_{+,k}(i)}^{DR}(O_i;\hat{G},\hat{S})$ is $k$th closest unit to unit $i$ among $\{W_i : W_i \geq w_0\}$ and  $\hat{Y}_{l_{-,k}}^{DR}(O_i;\hat{G},\hat{S})$ is $k$th closest unit to unit $i$ among $\{W_i : W_i < w_0\}$ for variable $\hat{Y}_i^{DR}(O_i;\hat{G},\hat{S})$, respectively. Let $\mathbf{X}_{\hat{h}}$, $\mathbf{W}_{\hat{h}+}$ and $\mathbf{W}_{\hat{h}-}$ are $\mathbf{X}_{h}$, $\mathbf{W}_{h+}$ and $\mathbf{W}_{h-}$ with estimated bandwidth. Then the variance estimator will be 
\begin{gather*}
\frac{1}{n}e_1^T(\hat{\boldsymbol{\Gamma}}_{\hat{h}+}^{-1}\hat{\boldsymbol{\phi}}_{YY+,DR,\hat{h}}\hat{\boldsymbol{\Gamma}}_{\hat{h}+}^{-1} + \hat{\boldsymbol{\Gamma}}_{\hat{h}-}^{-1}\hat{\boldsymbol{\phi}}_{YY-,DR,\hat{h}}\hat{\boldsymbol{\Gamma}}_{\hat{h}-}^{-1})e_1,
\end{gather*}
where    
\begin{gather*}
\hat{\boldsymbol{\Gamma}}_{\hat{h}+} = \mathbf{X}_{\hat{h}}^T\mathbf{W}_{\hat{h}+}\mathbf{X}_{\hat{h}} \quad \hat{\mathbf{\Gamma}}_{\hat{h}-} = \mathbf{X}_{\hat{h}}^T\mathbf{W}_{\hat{h}-}\mathbf{X}_{\hat{h}},
\end{gather*}
and $\hat{\boldsymbol{\phi}}_{YY+,DR,\hat{h}}$ is either $\hat{\boldsymbol{\phi}}_{YY+,DR,\hat{h}}^{pir}$ or $\hat{\boldsymbol{\phi}}_{YY+,DR,\hat{h}}^{rb}$ and define $\hat{\boldsymbol{\phi}}_{YY-,DR,\hat{h}}$ similarly. For variance estimation with the fuzzy RD estimator, by extending the method from the sharp RD design case, we can calculate $\hat{\boldsymbol{\phi}}_{YY+,DR,\hat{h}}^{pir}$, $\hat{\boldsymbol{\phi}}_{YY-,DR,\hat{h}}^{pir}$ $\hat{\boldsymbol{\phi}}_{YY+,DR,\hat{h}}^{rb}$, $\hat{\boldsymbol{\phi}}_{YY-,DR,\hat{h}}^{rb}$. In these calculations, $\hat{\epsilon}_{Y+,DR,i}$ and $\hat{\epsilon}_{Y-,DR,i}$ are based on $\hat{\alpha}_{R,DR,\hat{h}}^{FRD,Y}(\hat{G},\hat{S})$ and $\hat{\alpha}_{L,DR,\hat{h}}^{FRD,Y}(\hat{G},\hat{S})$, resepctively. Now the covariance term between the transformed response and $Z$ should be reflected in the estimation. Let $\hat{\epsilon}_{Z+,i} = Z_i - \hat{\alpha}_{R,\hat{h}}^{FRD,Z}$ and $\hat{\epsilon}_{Z-,i} = Z_i - \hat{\alpha}_{L,\hat{h}}^{FRD,Z}$. Then
\begin{gather*}
\hat{\boldsymbol{\phi}}_{YZ+,DR,\hat{h}}^{rb} = \frac{1}{n}\sum_{i=1}^n I(W_i \geq w_0) K\bigg{(}\frac{W_i-w_0}{\hat{h}}\bigg{)}K\bigg{(}\frac{W_i-w_0}{\hat{h}}\bigg{)}b_{i}b_{i}^T\hat{\sigma}_{YZ+,DR}^2(W_i) \\
\hat{\boldsymbol{\phi}}_{YZ-,DR,\hat{h}}^{rb} = \frac{1}{n}\sum_{i=1}^n I(W_i < w_0) K\bigg{(}\frac{W_i-w_0}{\hat{h}}\bigg{)}K\bigg{(}\frac{W_i-w_0}{\hat{h}}\bigg{)}b_{i}b_{i}^T\hat{\sigma}_{YZ-,DR}^2(W_i) \\
\hat{\boldsymbol{\phi}}_{ZZ+,\hat{h}}^{rb} = \frac{1}{n}\sum_{i=1}^n I(W_i \geq w_0) K\bigg{(}\frac{W_i-w_0}{\hat{h}}\bigg{)}K\bigg{(}\frac{W_i-w_0}{\hat{h}}\bigg{)}b_{i}b_{i}^T\hat{\sigma}_{ZZ+}^2(W_i) \\
\hat{\boldsymbol{\phi}}_{ZZ-,\hat{h}}^{rb} = \frac{1}{n}\sum_{i=1}^n I(W_i < w_0) K\bigg{(}\frac{W_i-w_0}{\hat{h}}\bigg{)}K\bigg{(}\frac{W_i-w_0}{\hat{h}}\bigg{)}b_{i}b_{i}^T\hat{\sigma}_{ZZ-}^2(W_i),
\end{gather*}
where 
\begin{gather*}
\hat{\sigma}_{YZ+,DR}^2(W_i) = I(W_i \geq w_0)\frac{K}{K+1}\bigg{(}\hat{Y}_i^{DR}(O_i) - \frac{1}{K}\sum_{k=1}^K \hat{Y}_{l_{+,k}(i)}^{DR}(O_i) \bigg{)}\bigg{(}Z_i - \frac{1}{K}\sum_{k=1}^K Z_{l_{+,k}(i)} \bigg{)} \\
\hat{\sigma}_{YZ-,DR}^2(W_i) = I(W_i < w_0)\frac{K}{K+1}\bigg{(}\hat{Y}_i^{DR}(O_i) - \frac{1}{M}\sum_{k=1}^K \hat{Y}_{l_{-,k}(i)}^{DR}(O_i) \bigg{)}\bigg{(}Z_i - \frac{1}{K}\sum_{k=1}^K Z_{l_{-,k}(i)} \bigg{)} \\ 
\hat{\sigma}_{ZZ+}^2(W_i) = I(W_i \geq w_0)\frac{K}{K+1}\bigg{(}Z_i - \frac{1}{K}\sum_{k=1}^K Z_{l_{+,k}(i)} \bigg{)}^2 \\
\hat{\sigma}_{ZZ-}^2(W_i) = I(W_i < w_0)\frac{K}{K+1}\bigg{(}Z_i - \frac{1}{K}\sum_{k=1}^K Z_{l_{-,k}(i)} \bigg{)}^2.
\end{gather*}
where $Z_{l_{+,k}(i)}$ and $Z_{l_{-,k}(i)}$ are defined similarly as $\hat{Y}_{l_{+,k}(i)}^{DR}(O_i)$ and $\hat{Y}_{l_{-,k}(i)}^{DR}(O_i)$. For plug-in approach, in addtion to quantities from sharp RD, we can calculate $\{\hat{\boldsymbol{\phi}}_{YZ+,DR,\hat{h}}^{pir}, \hat{\boldsymbol{\phi}}_{YZ-,DR,\hat{h}}^{pir}, \hat{\boldsymbol{\phi}}_{ZZ+,\hat{h}}^{pir}$,$\hat{\boldsymbol{\phi}}_{ZZ-,\hat{h}}^{pir}$\} similarly by replacing $\{\hat{\sigma}_{YZ+,DR}^2(W_i)$, $\hat{\sigma}_{YZ-,DR}^2(W_i),\hat{\sigma}_{ZZ+}^2(W_i)$, $\hat{\sigma}_{ZZ-}^2(W_i)\}$ by $\{\hat{\epsilon}_{Y+,DR,i}\hat{\epsilon}_{Z+,i}, \hat{\epsilon}_{Y-,DR,i}\hat{\epsilon}_{Z-,i}, \hat{\epsilon}_{Z+,i}^2, \hat{\epsilon}_{Z-,i}^2$\} in fuzzy RD. \\
\indent Let $\hat{\tau}_{DR}^{Y}(\hat{G},\hat{S}) = \hat{\alpha}_{R,DR,\hat{h}}^{FRD,Y}(\hat{G},\hat{S}) - \hat{\alpha}_{L,DR,\hat{h}}^{FRD,Y}(\hat{G},\hat{S})$ and $\hat{\tau}_{Z} = \hat{\alpha}_{R,\hat{h}}^{FRD,Z} - \hat{\alpha}_{L,\hat{h}}^{FRD,Z}$. Denote $\hat{\tau}_{DR}^{Y}$ to be $\hat{\tau}_{DR}^{Y}(\hat{G},\hat{S})$. Moreover, $\hat{\boldsymbol{\phi}}_{ZZ+,\hat{h}}$ is either $\hat{\boldsymbol{\phi}}_{ZZ+,\hat{h}}^{pir}$ or $\hat{\boldsymbol{\phi}}_{ZZ+,\hat{h}}^{rb}$. $\hat{\boldsymbol{\phi}}_{ZZ-,\hat{h}}$ is defined similarly. By the plug-in approach from asymptotic results (see Supplementary Materials),
\begin{gather*}
\frac{1}{\hat{\tau}_{Z}^2}(\hat{V}_{YY+,DR,\hat{h}} + \hat{V}_{YY-,DR,\hat{h}}) - \frac{2\hat{\tau}_{DR}^{Y}}{\hat{\tau}_{Z}^3}(\hat{V}_{YZ+,DR,\hat{h}}+\hat{V}_{YZ-,DR,\hat{h}}) + \frac{(\hat{\tau}_{DR}^{Y})^2}{\hat{\tau}_{Z}^4}(\hat{V}_{ZZ+,\hat{h}}+\hat{V}_{ZZ-,\hat{h}}),
\end{gather*}
where 
\begin{gather*}
\hat{V}_{YY+,DR,\hat{h}} = \frac{1}{n}e_1^T\hat{\boldsymbol{\Gamma}}_{\hat{h}+}^{-1}\hat{\boldsymbol{\phi}}_{YY+,DR,\hat{h}}\hat{\boldsymbol{\Gamma}}_{\hat{h}+}^{-1}e_1 \quad \hat{V}_{YY-,DR,\hat{h}} = \frac{1}{n}e_1^T\hat{\boldsymbol{\Gamma}}_{\hat{h}-}^{-1}\hat{\boldsymbol{\phi}}_{YY-,DR,\hat{h}}\hat{\boldsymbol{\Gamma}}_{\hat{h}-}^{-1}e_1 \\
\hat{V}_{YZ+,DR,\hat{h}} = \frac{1}{n}e_1^T\hat{\Gamma}_{\hat{h}+}^{-1}\hat{\boldsymbol{\phi}}_{YZ+,DR,\hat{h}}\hat{\boldsymbol{\Gamma}}_{\hat{h}+}^{-1}e_1 \quad \hat{V}_{YZ-,DR,\hat{h}} = \frac{1}{n}e_1^T\hat{\boldsymbol{\Gamma}}_{\hat{h}-}^{-1}\hat{\boldsymbol{\phi}}_{YZ-,DR,\hat{h}}\hat{\boldsymbol{\Gamma}}_{\hat{h}-}^{-1}e_1 \\
\hat{V}_{ZZ+,\hat{h}} = \frac{1}{n}e_1^T\hat{\boldsymbol{\Gamma}}_{\hat{h}+}^{-1}\hat{\boldsymbol{\phi}}_{ZZ+,\hat{h}}\hat{\boldsymbol{\Gamma}}_{\hat{h}+}^{-1}e_1 \quad \hat{V}_{ZZ-} = \frac{1}{n}e_1^T\hat{\boldsymbol{\Gamma}}_{\hat{h}-}^{-1}\hat{\boldsymbol{\phi}}_{ZZ-,\hat{h}}\hat{\boldsymbol{\Gamma}}_{\hat{h}-}^{-1}e_1. 
\end{gather*}
The nonparametric bootstrap is another method we will consider for standard error estimation. 
For implementation of our method, we can use existing software for uncensored data. The R package \verb|rdboust| (Calonico, Cattaneo and Titiunik, 2015b) is a powerful tool to perform statistical inference for RD designs. To implement the proposed methods, one can simply transform the response by methods in Section 4.3. Then with the transformed quantities as a new response, we can estimate regression coefficients along with standard errors.  

\section{Simulation results} 
We performed simulation studies to evaluate the finite-sample properties of our proposed estimators. The forcing variable $W$ is generated as a $Unif(0,1)$ random variable. The error variable is generated as $\epsilon \sim N(0,0.5)$. Regression coefficients are set to be $\beta_{10}=2$, $\beta_{20}=1$ and $\beta_{30}=1$. The response is generated from the following model: 
\begin{gather*}
T=\exp(\beta_{10} + \beta_{20}W + \beta_{30}I(W \geq 0.5) + \epsilon).
\end{gather*}
Censoring is generated as a $Unif(0,50)$ random variable that is independent of $T$ and $W$. Three conditional expectation methods were considered in the simulation study: Cox model, Log-normal and Log-Logistic model. We use Kaplan-Meier to estimate $G$. To ensure positivity of $\hat{G}$, we truncate $\tilde{T}$ by time point $\omega$ where $\omega$ is 95th percentile of observed time for estimation of $G$ (Steingrimsson et al. 2016). Censoring distribution is fitted using Kaplan-Meier. Sample sizes are $n=200$ and $n=400$. The number of bootstraps within each simulation is 50. To select bandwidth by cross-validation, it is important to select $\xi$ based on the range of dataset. The value of $\xi$ is 0.5.   The amount of observed censoring is approximately 51\% across the simulations. Kernel function is triangular function, which is
\begin{gather*}
K(u) = 1 - |u|.
\end{gather*}  
\indent Table 1 shows finite-sample properties of the estimator $\hat{\tau}_{SRD}$. In the columns denoting standard error calculation and coverage, NN, Plug-in and Boot denote the nearest neighborhood, plug-in residual and bootstrap approaches, respectively.  For coverage, all the calculations are based on the normal approximation except the Boot\_ED, which denotes coverage based on the empirical distribution of bootstrapped samples. The IPCW approach is more biased than doubly robust approach. Except for the bootstrap, in general, the coverage of the estimators satisfies the 95\% nominal level. The efficiency gain of DR approach compared to IPCW approach is noticeable. The performances of DR approaches across the conditional expectations are very stable. The results from the DR approach confirms the augmentation theory results from Tsiatis (2007). \\
\indent In the simulation study, we consider the fuzzy RD based on a modification of the simulation setting in Yang (2013). We generate $W \sim Unif(-1,1)$, and let $V = I(W \geq 0)$. Next, we generate $\kappa \sim N(0,0.25)$ and independent of all these aforementioned variables. Then treatment variable is defined as $Z = I(-0.5 + V + W + \kappa > 0)$. Then we generate $\epsilon \sim N(0,0.25)$ that is independent of aforementioned variables. Failure time is defined as $T = \exp(\beta_{10} + \beta_{20} W + \beta_{30} Z + \epsilon)$ where regression coefficients are set to be $\beta_{10}=2$, $\beta_{20}=1$ and $\beta_{30}=1$. The censoring variable is generated as $Unif(0,50)$. The average censoring rate is approximately 39\%. In this case, the denominator for the true value is calculated as 
\begin{gather*}
\lim_{w \downarrow 0} P(Z = 1|W=w) = P(\kappa > -0.5) = 1-  P(\kappa \leq -0.5) = 1 - \Phi(-2) \\
\lim_{w \uparrow 0} P(Z = 1|W=w) = P(\kappa > 0.5) = 1-  P(\kappa \leq 0.5) = 1 - \Phi(2),
\end{gather*}
where $\Phi$ is an inverse function of the standard normal cumulative distribution function. Hence the denominator should be $\Phi(2) - \Phi(-2)$. For the numerator,
\begin{gather*}
\lim_{w \downarrow 0} E\{\log(T)|W=w\} = \beta_{10} + \beta_{20} \times 0 + \lim_{w \downarrow 0} P(Z = 1|W=w) \\
\lim_{w \uparrow 0} E\{\log(T)|W=w\} = \beta_{10} + \beta_{20} \times 0 + \lim_{w \uparrow 0} P(Z = 1|W=w).
\end{gather*}
Hence the numerator and denominator are equal so that the average treatment effect for those who comply with the treatment assignment is 1. We use the same conditional expectation methods as in the sharp RD case. Table 2 shows the numerical results for sample sizes $n=250$ and $n=500$.  For all approaches, the bias is greater than those reported in Table 1.  This makes sense because the estimator in fuzzy RD has a denominator which requires estimation by nonparametric method, which introduces bias. As with the sharp RD situation, the DR method shows good performance regardless of choice of conditional expectation. The IPCW method has a larger bias than the DR methods. 
The coverage probability tend to perform better in larger sample sizes. It is interesting to note that the coverage based on the empirical distribution using the bootstrap works better for fuzzy RD relative to sharp RD.

\begin{table}[!ht]
\centering
\begin{tabular}{rr|r|rrr|rrrr}
  \hline
&& \multirow{2}{*}{Bias} & \multicolumn{3}{c|}{SE} & \multicolumn{4}{c}{Cover} \\
 && &  NN & Plug-in & Boot & NN & Plug-in & Boot & Boot\_ED \\ 
  \hline
$n=200$ & IPCW & 0.074  & 1.422 & 1.390 & 1.542 & 0.900 & 0.894 & 0.916 & 0.912 \\ 
& Cox & -0.004  & 0.123 & 0.119 & 0.121 & 0.942 & 0.944 & 0.942 & 0.908 \\ 
& Log-norm & -0.008  & 0.136 & 0.132 & 0.136 & 0.946 & 0.944 & 0.942 & 0.910 \\ 
& Log-log & -0.008  & 0.136 & 0.132 & 0.139 & 0.942 & 0.942 & 0.946 & 0.912 \\ \hline
$n=400$ & IPCW & 0.091  & 1.008 & 0.995 & 1.084 & 0.930 & 0.930 & 0.944 & 0.924 \\ 
& Cox & -0.003  & 0.086 & 0.085 & 0.084 & 0.940 & 0.932 & 0.938 & 0.898 \\ 
& Log-norm & -0.004  & 0.095 & 0.093 & 0.094 & 0.922 & 0.920 & 0.936 & 0.892 \\ 
& Log-log & -0.004  & 0.096 & 0.094 & 0.096 & 0.920 & 0.924 & 0.936 & 0.890 \\ 
   \hline
\end{tabular}
\caption{Numerical results when sample size $n=200$ and $n=400$ in sharp RD}
\end{table}

\begin{table}[!ht]
\centering
\begin{tabular}{rr|r|rrr|rrrr}
  \hline
& & \multirow{2}{*}{Bias} & \multicolumn{3}{c|}{SE} & \multicolumn{4}{c}{Cover} \\
& & & NN & Plug-in & Boot & NN & Plug-in & Boot & Boot\_ED \\ 
  \hline
$n=250$ & IPCW & 0.051 & 1.271 & 1.147 & 1.269 & 0.906 & 0.876 & 0.896 & 0.942 \\ 
& Cox & 0.010 & 0.205 & 0.179 & 0.198 & 0.910 & 0.880 & 0.904 & 0.940 \\ 
& Log-norm & 0.012 & 0.215 & 0.188 & 0.207 & 0.918 & 0.880 & 0.882 & 0.936 \\ 
& Log-log & 0.011 & 0.215 & 0.188 & 0.207 & 0.920 & 0.880 & 0.890 & 0.942 \\ \hline
$n=500$ & IPCW & 0.136  & 0.916 & 0.831 & 1.363 & 0.926 & 0.913 & 0.950 & 0.952 \\ 
& Cox & 0.022 &  0.144 & 0.130 & 0.182 & 0.915 & 0.909 & 0.954 & 0.938 \\ 
& Log-norm & 0.026  & 0.150 & 0.136 & 0.193 & 0.934 & 0.924 & 0.954 & 0.948 \\ 
& Log-log & 0.026  & 0.150 & 0.136 & 0.194 & 0.932 & 0.920 & 0.954 & 0.946 \\ 
   \hline
\end{tabular}
\caption{Numerical results for mean response when sample size $n=250$ and $n=500$ in fuzzy RD}
\end{table}

\section{Real Data Analysis}
We now apply the proposed methodology to men who participated the Prostate, Lung, Colorectal and Ovarian (PLCO) cancer screening trial and randomized to receive annual prostate-specific antigen (PSA) screening for 6 years.  Among these 76,682 men received annual PSA screening from 1993 to 2001, those with a PSA of 4.0 ng/ml at anytime was recommended for further workup and biopsy, e.g., PSA-based screening strategy, for prostate cancer diagnosis.  In the context of RD design, this practice naturally creates a sharp RD design.   We therefore evaluated the role of additional workup and biopsy, as prompted by a PSA cutoff of 4.0 ng/ml, on patient survival and cancer incidence.  To simplify our discussions, here we focus on the role of PSA-based screening at the time of study entry among those who were tested for PSA previously.  While the role of PSA-based screening among those who had been exposed to PSA may be also of interest, its analysis involves methodology that is still being developed and will not be discussed here. \\
\indent Although the local randomization property holds for RD design, this property assumes that treatment assignment is independent of other covariates. To alleviate the potential impacts due to the associations between PSA-based screening strategy and covariates at study entry, we conducted propensity score matching with age, weight, height and BMI at the baseline.  Using matching, we have 2681 people in the two treatment groups, those with PSA level greater than 4.0 ng/ml and those with PSA level less than or equal to 4.0 ng/ml. In this sample, censoring rates for mortality and cancer incidence are 78.6\% and 61.3\%, respectively. \\
\indent We select DR approach with conditional expectation method on parametric AFT model. For parametric AFT model, we choose lognormal and loglogistic distributions. Table 3 shows results from sharp RD. We use nearest neighbor (NN) approach for calculation of standard error.
\begin{table}[!ht]
\centering
\begin{tabular}{ccccc} \hline
& \multicolumn{2}{c}{Mortality} & \multicolumn{2}{c}{First cancer incidence} \\
& Effect & SE & Effect & SE\\ \hline
Log-norm & 0.054 & 0.243 & -0.203 & 0.260 \\
Log-log & 0.043 & 0.195 & -0.193 & 0.242 \\ \hline
\end{tabular}
\caption{Results from PLCO dataset}
\label{tab_data}
\end{table}
Results show that there is no significant screening effect from baseline PSA threshold level 4.0mg/nL for mortality and cancer incidence. The screening effect with cutpoint 4.0mg/nL for mortality is slightly positive, which implies that screening slightly increase survival time although the effect is not statistically significant. On the other hand, screening decreases the time to first cancer incidence.  We also create a data-driven RD plot proposed by Calonico, Cattaneo and Titiunik (2015a) in Figure 1. This data-driven plot reflects the variability in the data by using local sample means with evenly-spaced bins. Although it is not intuitive to use our transformed response directly with theory in Calonico, Cattaneo and Titiunik (2015a), these plots are still useful to capture the variability of the transformed response with cutoffs and to check causal effect graphically. These two plots also indicate no treatment effect using a baseline PSA threshold level of 4.0mg/nL.  
\begin{figure}[!ht]
\centering
    \begin{minipage}{0.5\textwidth}
        \centering
        \includegraphics[scale = 0.35]{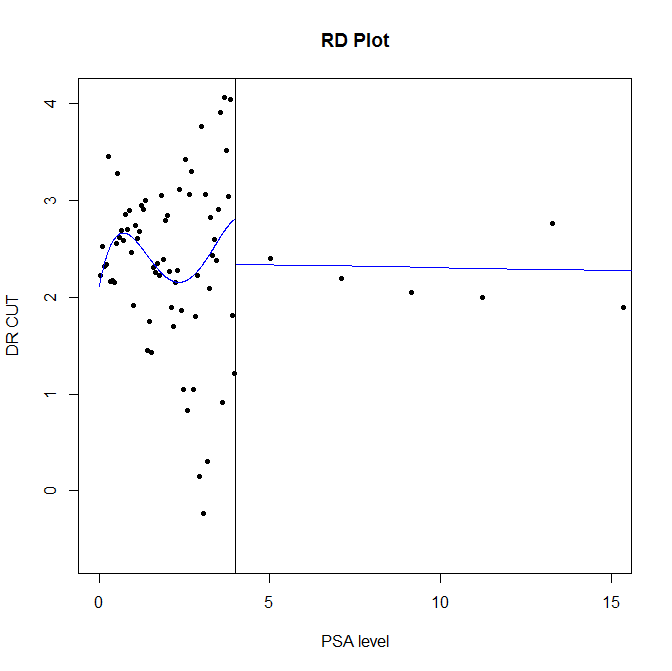}
    \end{minipage}%
    \begin{minipage}{0.5\textwidth}
        \centering
        \includegraphics[scale=0.35]{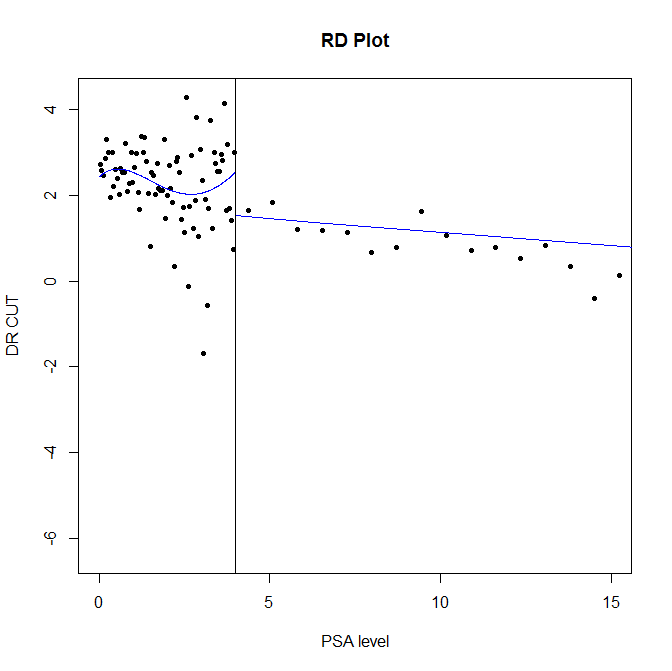}
    \end{minipage}
    \caption{Data-driven RD plot by Calonico, Cattaneo and Titiunik (2015a) for mortality (left) and the first cancer incidence (right)}
\end{figure}

\section{Conclusion} 
In this manuscript, we have proposed new estimators of causal effects with censored data in RD designs. Simulation studies show that the DR approach yields a more efficient estimator than IPCW. Moreover, the bias of DR method is smaller than one of DR method. Moreover, our method uses existing software, so researchers who easily apply this methodology. \\
\indent In this manuscript, only one forcing variable is considered for analysis. However, in practice data may contain several forcing variables, and they may provide additional information for treatment effect. There are two possible scenarios : (i) the forcing variable is a function of multiple covariates; (ii) there is one forcing variable correlated with other covariates as shown in our data analysis. Imbens and Zajonc (2009) and Zajonc (2012) study the situation of multiple forcing variables. Recently, Calonico et al. (2018) propose covariate adjustment approach in RD. It is great of interest to include covariates or consider the composite forcing variable in RD analysis. Our future work is to propose estimation procedure of treatment effect in RD adjusting for effects of other covariates. Although we only have one forcing variable without covariates, it is expected for the forcing variable to have multiple cutoffs. Cattaneo et al. (2016) discuss the multi-cutoff problem. This is also an interesting future work. \\ 
\indent We have adapted the LM approach for bandwidth selection. 
Imbens and Kalyanaraman (2012) propose optimal bandwidth selection based on mean square error approximation and Calonico, Cattaneo and Titiunik (2014) propose bandwidth selection which helps bias correction. These have elegant asymptotic theory for their bandwidth selection proposals. This is currently under investigation.


\vskip 14pt
\noindent {\large\bf Acknowledgements}

\noindent The authors thank Dr. Jon Steingrimsson, Dr. Liqun Diao and Dr. Robert Strawderman for sharing computer code related to the IPCW and DR transformations.  
\par
\appendix

\section*{Supplementary Materials}
\section{Calculation of $\hat{G}$ and $\hat{Q}_Y(\cdot,\cdot)$}
In our estimation procedure, it is important to be able to model the censoring distribution and compute the relevant conditional expectation. We follow the approach from Steingrimsson et al.(2016). The censoring distribution can be modeled by a Kaplan-Meier estimator. The conditional expectation can be modeled in several different ways. Mathematically, for a general nondecreasing function $H$, it is represented as 
\begin{gather*}
E[H(T)|T \geq u,W] = \frac{\displaystyle \int_u^{\infty} H(T)dF(u|W)}{P(T \geq u|W)},
\end{gather*}
where $F$ is distribution of the failure time $T$. Various approaches can be taken to calculate this expectation.  One can use tree-based methods to find observations in the terminal node by the split, and then compute a Kaplan-Meier estimator based on the observations in the relevant terminal node. When using nonparametric and semiparametric methods, this integral is changed to summation at the time point when failure occurs. Note that in the nonparametric or semiparametric methods, the integral is computed as long as uncensored observation(s) greater than $u$ exists. If not, we let the estimator be maximum of observed time points (Steingrimsson et al. 2016). This kind of adjustment is required because the survival function estimated through a nonparametric or semiparametric method may not be proper, i.e., for a given time $t_0$, there may exist an interval with survival function being equal to $S(t_0)$. Estimation with parametric distributions is preferred because the parametric approach will always yield a proper survival function. Standard AFT models with parametric distributions (e.g. lognormal and log-logistic) can be used for the estimation (Steingrimsson et al. 2016). 

\section{Proof of asymptotic results}
\setcounter{equation}{0}

In this section, we provide proofs for the main asymptotic results in the paper. Since the proofs for IPCW and DR are similar, we mainly prove the result for DR. Moreover, results for IPCW depending on the censoring distribution only are similar. The proofs extend the arguments of Hahn, Todd and Van der Klaauw (1999). \\  
\indent Recall that $O = (\tilde{T},\Delta,W,Z)$. The observed data is given by $\mathcal{O} = \{O_i\}_{i=1}^n$ where $O_i = (\tilde{T}_i,\Delta_i,W_i,Z_i)$. Let $G$ and $S$ be survival probabilities of censoring and failure time based on possibly incorrect models. Let $G_0$ and $S_0$ be survival probabilities of censoring and failure time based on the true models for failure time and censoring, respectively. We assume that $G_0(u)$ and $S_0(u|w)$ are continuous and nonincreasing functions in $u$ for each $w$ with $0 \leq G_0(u), S_0(u|w) \leq 1$. Furthermore, we assume that $G(u)$ and $S(u|w)$ are right-continuous and nonincreasing in $u$ for each $w$ with $0 \leq G(u), S(u|w) \leq 1$ and $G_0(0) = G(0) = S_0(0|w) = S(0|w) = 1$. Define $F_0(u|w) = 1-S_0(u|w)$, $\bar{G}_0(u) = 1-G_0(u)$, $F(u|w) = 1-S(u|w)$ and $\bar{G}(u) = 1-G(u)$. Now we express $Q_Y(\cdot,\cdot)$ by $Q_Y(\cdot,\cdot,S)$ to indicate dependence of conditional expectation on failure time distribution. We first assume the following regularity conditions, similar to those in Steingrimsson, Diao, and Strawderman (2019):
\begin{itemize}
\item[(C1)] $I_1 = \displaystyle \int_0^{\infty} \log(u) \frac{G_0(u)}{G(u-)}dF_0(u|w) < \infty$
\item[(C2)] For $a > 0$, 
\begin{gather*}
D_1(a) = \displaystyle \int_0^{a} \frac{S_0(u|w)}{S(u|w)}\frac{d\bar{G}_0(u)}{G(u-)} < \infty \quad D_2(a) = \displaystyle \int_0^{a} \frac{G_0(u)S_0(u|w)}{G(u)S(u|w)}\frac{d\bar{G}(u)}{G(u-)} < \infty
\end{gather*}
\item[(C3)] $I_2 = \displaystyle \int_0^{\infty} \log(u)[D_1(a-) - D_2(a-)]dF(u|w) < \infty$
\item[(C4)] $\displaystyle \int_0^{\infty} \frac{[\log(u)]^2}{G_0(u)}dF_0(u|w) < \infty$
\item[(C5)] $D_3(a) = \displaystyle \int_0^a \frac{Q_Y(u,w,S)}{\{G_0(u)\}^2}d\bar{G}_0(u) < \infty$ for each $a > 0$.
\item[(C6)] $\hat{G}$ is uniformly consistent to $G_0$. 
\item[(C7)] $\hat{S}$ is uniformly consistent to $S^*$ where $S^*$ is possibly incorrect model of $S$. 
\end{itemize}
As Steingrimsson, Diao, and Strawderman (2019) have shown, from conditions (C1)-(C5), we can prove that 
\begin{gather*}
E(Y_{DR}(O;G_0,S)|W) = E(Y_{DR}(O;G,S_0)|W) = E(Y_{DR}(O;G_0,S_0)|W) = E(Y|W) = \mu(W).
\end{gather*} 
This is necessary for proving asymptotic normality of $\hat{\tau}_{FRD}^{IPCW}(\hat{G})$, $\hat{\tau}_{FRD}^{DR}(\hat{G},\hat{S})$, $\hat{\tau}_{SRD}^{IPCW}(\hat{G})$, $\hat{\tau}_{SRD}^{DR}(\hat{G},\hat{S})$. Let $p(w) = E(Z|W=w)$ and define 
\begin{align*}
& \mu^+(w) = \lim_{w \downarrow w_0} E(Y|W=w) \quad \mu^-(w) = \lim_{w \uparrow w_0} E(Y|W=w) \\
& p^+(w) = \lim_{w \downarrow w_0} P(Z=1|W=w) \quad p^-(w) = \lim_{w \uparrow w_0} P(Z=1|W=w).
\end{align*}
Now we need conditions similar to Hahn, Todd and Van der Klaauw (1999). Define
\begin{gather*}
\hspace*{-20mm}
Y_{IPCW}(O;G) = \frac{\Delta Y}{G(T)} \quad Y^{DR}(O;G,S) = \frac{\Delta Y}{G(T)} + \int_0^{\tilde{T}} \frac{Q_Y(u,W,S)}{G(u)}dM_G(u) \\
Y_{IPCW^*}(O;G) = Y_{IPCW}(O;G) - \mu^+(w_0) - \mu'^+(w_0)(W - w_0) \\
Y_{DR^*}(O;G,S) = Y^{DR}(O;G,S) - \mu^+(w_0) - \mu'^+(w_0)(W - w_0) \\
Z^* = Z - p^+(w_0) - p'^+(w_0)(W - w_0) \\
L_{ih}^+ = I(W_i \geq w_0) K\bigg{(}\frac{W_i- w_0}{h}\bigg{)}  \quad L_{ih}^- = I(W_i < w_0) K\bigg{(}\frac{W_i- w_0}{h}\bigg{)}.
\end{gather*}
We can define empirical quantities of $Y_{IPCW}(O;G),Y_{IPCW^*}(O;G),Y_{DR}(O;G,S),Y_{DR^*}(O;G,S),Z^*$. Denote the empirical quantities by $Y_{IPCW,i}(O_i;G),Y_{IPCW^*,i}(O_i;G),Y_{DR,i}(O_i;G,S),Y_{DR^*,i}(O_i;G,S),Z_i^*$. We further define
\begin{gather*}
\sigma_{DR}^2(w;G,S) = Var(Y_{DR}(O;G,S)|W=w) \quad \sigma_{DR}^{2+}(w_0;G,S) = \lim_{\epsilon \downarrow w_0} Var(Y_{DR}(O;G,S)|W=w) \\
\sigma_{DR}^{2-}(w_0;G,S) = \lim_{\epsilon \uparrow w_0} Var(Y_{DR}(O;G,S)|W=w) \quad \eta_{DR}(w;G,S) = Cov(Y_{DR}(O;G,S), Z|W=w) \\
\hspace*{-15mm}
\eta_{DR}^+(w_0;G,S) = \lim_{w \downarrow w_0} Cov(Y_{DR}(O;G,S), Z|W=w) \quad \eta_{DR}^-(w_0;G,S) = \lim_{w \uparrow w_0} Cov(Y_{DR}(O;G,S), Z|W=w).
\end{gather*}
We can make similar definitions for $\sigma_{IPCW}^2(G,S),\sigma_{IPCW}^{2+}(w_0;G,S),\sigma_{IPCW}^{2-}(w_0;G,S),\eta_{IPCW}(w_0;G,S),\\
\eta_{IPCW}^+(w_0;G,S)$ and $\eta_{IPCW}^-(w_0;G,S)$. Define matrices
\begin{gather*}
\mathbf{X}_h = 
\begin{pmatrix}
1 & \dfrac{W_1 - w_0}{h} \\
1 & \dfrac{W_2 - w_0}{h} \\
\vdots & \vdots \\
1 & \dfrac{W_n - w_0}{h} 
\end{pmatrix}
\\
\hspace*{-15mm}
\mathbf{W}_h^+ = 
\begin{pmatrix}
I(W_1 \geq w_0) K\bigg{(}\dfrac{W_1- w_0}{h}\bigg{)} & 0 & 0 & \ldots & 0 \\
0 & I(W_2 \geq w_0) K\bigg{(}\dfrac{W_i- w_0}{h}\bigg{)} & 0 & \ldots & 0 \\
\vdots & \vdots & \vdots & \ddots & 0 \\
0 & 0 & \ldots & 0 & I(W_n \geq w_0) K\bigg{(}\dfrac{W_i- w_0}{h}\bigg{)} 
\end{pmatrix}.
\end{gather*}

\begin{itemize}
\item[(R1)] For $W \ne w_0$, let $\mu(z)$ and $p(z)$ be twice continuously differentiable functions. Let $\mu'(w)$ and $\mu''(w)$ be the first and second derivatives of $\mu(w)$ and similarly for $p'(w)$ and $p''(w)$. Let $\mu'^+(w)$ and $\mu''^+(w)$ be the first and second derivatives of $\mu^+(w)$, and $p'^+(w)$ and $p''^+(w)$ are the first and second derivatives of $p(w)$.  Define $\mu'^-(w)$ and $\mu''^-(w)$ to be the first and second derivative of $\mu^-(w)$ and $p'^-(w)$ and $p''^-(w)$ are first and second derivative of $p^-(w)$. Assume there exists $B >0$ such that $|\mu^+(w)|,|\mu'^+(w)|,|\mu''^+(w)|$ and $|p^+(w)|,|p'^+(w)|,|p''^+(w)|$ are uniformly bounded on $(w_0,w_0+B]$. Similarly, $|\mu^-(w)|,|\mu'^-(w)|,|\mu''^-(w)|$ and $|p^+(w)|,|p'^+(w)|,|p''^+(w)|$ are uniformly bounded on $[w_0-B,w_0)$. 
\item[(R2)] Assume that $\mu^+(w_0), \mu'^+(w_0), \mu''^+(w_0), \mu^-(w_0). \mu'^-(w_0), \mu''^-(w_0)$, $p^+(w_0), p'^+(w_0), p''^+(w_0), \\
p^-(w_0), p'^-(w_0)$ and $p''^-(w_0)$ are finite.
\item[(R3)] Let $g(w)$ be the common density of $W_i$. Assume that $g(w)$ is continuous and bounded away from zero in a neighborhood of $w_0$.
\item[(R4)] $\sigma_{IPCW}^2(w;G_0),\sigma_{DR}^2(w;G_0,S^*)$ and $\eta_{IPCW}(w;G_0), \eta_{DR}(w;G_0,S^*)$ are uniformly bounded in a neighborhood of $w_0$. 
\item[(R5)] Assume that $\sigma_{IPCW}^{2+}(w_0;G_0), \sigma_{DR}^{2+}(w_0;G_0, S^*), \sigma_{IPCW}^{2-}(w_0;G_0), \sigma_{DR}^{2-}(w_0;G_0,S^*)$ and $\eta_{IPCW}^{+}(w_0;G_0),\eta_{DR}^+(w_0;G_0,S^*)$, $\eta_{IPCW}^{-}(w_0;G_0),\eta_{DR}^-(w_0;G_0,S^*)$ are finite.
\item[(R6)] $\lim\limits_{W_i \uparrow w_0} E\bigg{[}|Y_{IPCW,i}(O_i;G_0) - \mu(W_i)|^r \bigg{|}W_i\bigg{]}$ and $\lim\limits_{W_i \uparrow w_0} E\bigg{[}|Y_{DR,i}(O_i;G_0,S^*) - \mu(W_i)|^r \bigg{|}W_i\bigg{]}, r=1,2,3$ are finite. We assume similarly when $W_i \downarrow w_0$.
\item[(R7)] $K$ is continuous and symmetric. Moreover, support of $K$ is compact and for any $u$, $K(u) \geq 0$. 
\item[(R8)] The bandwidth satisfies $h \sim n^{-1/5}$ where $\sim$ indicates ``asymptotically equivalent". 
\item[(R9)] Let $V_n = o_p(1)$. Then
\begin{gather*}
E\bigg{[}\bigg{(}\frac{W_i - w_0}{h} \bigg{)}^{j_1} (L_{ih}^{+})^{j_2}V_n\bigg{]} = O(1) \quad j_1=0,\ldots,6 \quad j_2 = 1,2,3.
\end{gather*} 
\end{itemize}
Conditions (R1) - (R8) are standard ones for proving asymptotic normality of sharp and fuzzy RD estimators without censoring. (R9) is a required assumption for making $o_p(1)$ term from consistency of $\hat{G}$ and $\hat{S}$.\\
Modified versions of Lemma 1- Lemma 6 in Hahn, Todd and Van der Klaauw (1999) are required. Let $W_{ih}^r = \bigg{(}\dfrac{W_i - w_0}{h_n}\bigg{)}^r,r=0,1$.
Consider 
\begin{gather*}
\mathbf{A}_{IPCW^*,i,h}^+(G) = \begin{pmatrix} W_{ih}^0 Y_{IPCW^*,i}(O_i;G) L_{ih}^+ \\ W_{ih}^1 Y_{IPCW^*,i}(O_i;G) L_{ih}^+ \end{pmatrix} \quad 
\mathbf{A}_{DR^*,i,h}^+(G,S) = \begin{pmatrix} W_{ih}^0 Y_{DR^*,i}(O_i;G,S) L_{ih}^+ \\ W_{ih}^1 Y_{DR^*,i}(O_i;G,S) L_{ih}^+)^T \end{pmatrix} \\
\mathbf{A}_{Z^*,i,h}^+ = \begin{pmatrix} W_{ih}^0 Z_i^* L_{ih}^+ \\ W_{ih}^1 Z_i^* L_{ih}^+ \end{pmatrix}. 
\end{gather*}
Since the proof will be similar for $W < w_0$, we will only provide proofs for the case $W \geq w_0$. In the following lemmas, we assume that conditions (C1) - (C5) and (R1) - (R8) hold. 
Lemma 1 in Hahn, Todd and Van der Klaauw (1999) does not involve any modeling of failure time and censoring, so it clearly holds in our case.
\begin{lemma}
\begin{gather*}
\frac{1}{nh}(\mathbf{X}_h^T\mathbf{W}_h^+\mathbf{X}_h) \rightarrow g(w_0)\boldsymbol{\Omega}.
\end{gather*}
where $\boldsymbol{\Omega} = \begin{pmatrix} \gamma_0 & \gamma_1 \\ \gamma_1 & \gamma_2 \end{pmatrix}$ and $\gamma_j = \displaystyle \int_0^{\infty} u^j K(u)du, j=0,1,2$.
\end{lemma}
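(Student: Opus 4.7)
The matrix $\mathbf{X}_h^T\mathbf{W}_h^+\mathbf{X}_h$ is $2\times 2$ with $(j,k)$-entry
\begin{gather*}
\sum_{i=1}^n \left(\frac{W_i-w_0}{h}\right)^{j+k-2} I(W_i \geq w_0)\, K\!\left(\frac{W_i-w_0}{h}\right),
\end{gather*}
so the claim reduces to showing that, for $r=0,1,2$,
\begin{gather*}
S_{n,r} := \frac{1}{nh}\sum_{i=1}^n \left(\frac{W_i-w_0}{h}\right)^{r} I(W_i \geq w_0)\, K\!\left(\frac{W_i-w_0}{h}\right) \;\longrightarrow\; g(w_0)\,\gamma_r .
\end{gather*}
The plan is the standard two-step kernel argument: first compute the limit of $E[S_{n,r}]$, then bound $\mathrm{Var}(S_{n,r})$.

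For the mean, I would apply the change of variable $u=(w-w_0)/h$ to write
\begin{gather*}
E[S_{n,r}] = \frac{1}{h}\int_{w_0}^{\infty} \left(\frac{w-w_0}{h}\right)^{r} K\!\left(\frac{w-w_0}{h}\right) g(w)\, dw
= \int_{0}^{\infty} u^{r} K(u)\, g(w_0+hu)\, du .
\end{gather*}
Under (R3), $g$ is continuous at $w_0$; under (R7), $K$ has compact support and is nonnegative, so $u^r K(u)$ is bounded and integrable and $g(w_0+hu)$ is uniformly bounded on the support of $K$ for $h$ small. Dominated convergence then gives $E[S_{n,r}] \to g(w_0)\int_0^\infty u^r K(u)\,du = g(w_0)\gamma_r$.

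For the variance, since $S_{n,r}$ is a sample mean of i.i.d.\ terms scaled by $1/(nh)$,
\begin{gather*}
\mathrm{Var}(S_{n,r}) \leq \frac{1}{nh^2}\, E\!\left[\left(\frac{W_i-w_0}{h}\right)^{2r} I(W_i \geq w_0)\, K\!\left(\frac{W_i-w_0}{h}\right)^{2}\right]
= \frac{1}{nh}\int_0^\infty u^{2r}K(u)^2 g(w_0+hu)\,du ,
\end{gather*}
which is $O\!\big((nh)^{-1}\big)$ by the same change of variable and dominated convergence. Under (R8) we have $h \sim n^{-1/5}$, so $nh \to \infty$ and the variance vanishes. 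Combining the mean convergence with $\mathrm{Var}(S_{n,r}) \to 0$ and Chebyshev's inequality yields $S_{n,r} \xrightarrow{p} g(w_0)\gamma_r$, giving the claimed entrywise limits and hence the matrix limit.

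The argument is entirely routine kernel calculus; the only point requiring any care is that the indicator $I(W_i \geq w_0)$ restricts integration to $u \geq 0$, which is precisely what produces $\gamma_r = \int_0^\infty u^r K(u)\,du$ (rather than the full-line kernel moment) and thereby matches the block structure of $\boldsymbol{\Omega}$ associated with the right side $W \geq w_0$. No substantive obstacle arises because neither $G$ nor $S$ enters this lemma; as the paper notes, it is essentially Lemma 1 of Hahn, Todd and Van der Klaauw (1999) and transfers verbatim from the uncensored RD setting.
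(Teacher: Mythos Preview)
Your proposal is correct and is precisely the standard kernel-moment argument underlying this result; the paper itself offers no independent proof and simply cites Hahn, Todd and Van der Klaauw (1999), whose Lemma 1 proceeds exactly as you describe. There is nothing to add.
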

\begin{proof}
See Hahn, Todd and Van der Klaauw (1999).
\end{proof}
\begin{lemma}
Two terms
\begin{gather*}
E\bigg{\{}\frac{1}{nh}\sum_{i=1}^n \mathbf{A}_{IPCW^*,i,h}^+(O_i;G_0)\bigg{\}} \quad E\bigg{\{}\frac{1}{nh}\sum_{i=1}^n \mathbf{A}_{DR^*,i,h}^+(O_i;G_0,S_0)\bigg{\}} 
\end{gather*}
converge to  $\frac{1}{2}g(z_0)\mu''^{+}(w_0)h^2(\boldsymbol{\delta} + o(1))$ and $E\bigg{\{}\dfrac{1}{nh} \displaystyle \sum_{i=1}^n \mathbf{A}_{Z^*,i,h}^+\bigg{\}}$ converges to $\frac{1}{2}g(z_0)p''^{+}(w_0)h^2(\boldsymbol{\delta} + o(1))$ 
where
\begin{gather*}
\boldsymbol{\delta} = \begin{pmatrix} \delta_0 \\ \delta_1 \end{pmatrix} = \begin{pmatrix} \int_0^{\infty} u^2 K(u)du \\ \int_0^{\infty} u^3 K(u)du \end{pmatrix}.
\end{gather*}
\end{lemma}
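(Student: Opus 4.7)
The plan is to reduce each of the three expectations to a deterministic integral via the tower property, exploit the censoring unbiased property of the transformations to compute the inner conditional expectation, and then expand everything around $w_0$ using the smoothness conditions (R1)-(R2). Since the indicator $L_{ih}^+$ restricts attention to $W_i \geq w_0$ and the kernel is supported on a bounded set, the integrals over the positive side are all that matter.

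First, I would invoke the censoring unbiased identity noted just after (C7) to get the key reductions
\begin{gather*}
E\bigl[Y_{IPCW}(O;G_0)\mid W\bigr]=\mu(W),\qquad E\bigl[Y_{DR}(O;G_0,S_0)\mid W\bigr]=\mu(W).
\end{gather*}
Subtracting the linear correction defining $Y_{IPCW^*}$ and $Y_{DR^*}$ gives, on the event $W_i\geq w_0$,
\begin{gather*}
E\bigl[Y_{IPCW^*,i}(O_i;G_0)\mid W_i\bigr]=\mu(W_i)-\mu^+(w_0)-\mu'^+(w_0)(W_i-w_0),
\end{gather*}
and analogously for the DR version. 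A second-order Taylor expansion of $\mu^+(\cdot)$ at $w_0$, valid by (R1)-(R2), reduces each of these conditional expectations to
$\tfrac{1}{2}\mu''^+(w_0)(W_i-w_0)^2 + o\bigl((W_i-w_0)^2\bigr).$
The same argument applied to $p^+(\cdot)$ gives $E[Z_i^*\mid W_i]=\tfrac{1}{2}p''^+(w_0)(W_i-w_0)^2+o((W_i-w_0)^2)$ for $W_i\geq w_0$.

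Next I would compute the expectation of the vector by tower and a change of variables $u=(w-w_0)/h$. Writing out a representative component ($j\in\{0,1\}$),
\begin{gather*}
\frac{1}{h}E\!\left[W_{ih}^{j}\,Y_{IPCW^*,i}(O_i;G_0)\,L_{ih}^{+}\right]
=\frac{1}{h}\!\int_{w_0}^{\infty}\!\Bigl(\tfrac{w-w_0}{h}\Bigr)^{j}K\!\Bigl(\tfrac{w-w_0}{h}\Bigr)\,E\!\left[Y_{IPCW^*}(O;G_0)\mid W=w\right] g(w)\,dw.
\end{gather*}
Plugging in the Taylor expression above and substituting $w=w_0+hu$ turns this into
\begin{gather*}
\int_{0}^{\infty}u^{j}K(u)\Bigl[\tfrac{1}{2}\mu''^+(w_0)\,h^{2}u^{2}+o(h^{2})\Bigr] g(w_0+hu)\,du.
\end{gather*}
By continuity of $g$ at $w_0$ (R3), compact support and continuity of $K$ (R7), dominated convergence yields $g(w_0+hu)\to g(w_0)$ uniformly in $u$ on $\mathrm{supp}(K)\cap[0,\infty)$, so the display equals $\tfrac{1}{2}g(w_0)\mu''^+(w_0)h^{2}\bigl(\int_{0}^{\infty}u^{j+2}K(u)\,du+o(1)\bigr)$. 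Stacking $j=0$ and $j=1$ gives precisely $\tfrac{1}{2}g(w_0)\mu''^+(w_0)h^{2}(\boldsymbol{\delta}+o(1))$. The DR case is identical since its conditional expectation under $(G_0,S_0)$ equals $\mu(W)$, and the $Z^*$ case follows the same template with $p$ and $p''^+$.

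The main obstacle is a clean handling of the remainder: one must argue that the $o((W_i-w_0)^2)$ term, after multiplication by $u^jK(u)$ and integration, contributes only $o(h^2)$ after dividing by $h$. This follows because $K$ has compact support, so $|W_i-w_0|\leq Ch$ on the support of $L_{ih}^{+}$, making the Taylor remainder uniformly $o(h^2)$; combined with the boundedness of $\mu''^+$ and $g$ in a neighborhood of $w_0$ (R1, R3), dominated convergence applies. Everything else is a routine change of variables plus invoking the identity $E(Y_{DR}\mid W)=\mu(W)$ established from (C1)-(C5).
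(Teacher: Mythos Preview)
Your proposal is correct and follows essentially the same route as the paper: condition on $W_i$ to reduce the transformed responses to $\mu(W_i)$ via the censoring unbiased identity, subtract the linear part, control the Taylor remainder uniformly as $o(h^2)$ using the compact support of $K$, and change variables $u=(w-w_0)/h$ together with continuity of $g$ at $w_0$. The paper packages the remainder as an explicit function $\zeta^Y(w)$ with $\sup_{w_0<w\le w_0+Bh}|\zeta^Y(w)|=o(h^2)$, but this is exactly your $o((W_i-w_0)^2)$ argument rewritten.
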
  
\begin{proof}
We look at individual terms in $\mathbf{A}_{IPCW^*,i,h}^+(\mathcal{O};G_0)$ and $\mathbf{A}_{DR^*,i,h}^+(\mathcal{O};G_0,S)$. Let $q=0,1,2$ and for any model for $S$,
\begin{gather*}
U_{IPCW}^{q+}(\mathcal{O};G_0) = \frac{1}{nh} \sum_{i=1}^n \bigg{(}\frac{W_i-w_0}{h}\bigg{)}^qY_{IPCW^*,i}(O_i;G_0) L_{ih}^+ \\ 
U_{DR}^{q+}(\mathcal{O};G_0,S) = \frac{1}{nh}\sum_{i=1}^n \bigg{(}\frac{W_i-w_0}{h}\bigg{)}^qY_{DR^*,i}(O_i;G_0,S) L_{ih}^+.
\end{gather*}
Since $E(Y_{IPCW,i}(G_0)|W_i) = E(Y_{DR,i}(G_0,S)|W_i) = E(Y_{DR,i}(G,S_0)|W_i) = E(Y_{DR,i}(G_0,S_0)|W_i) = \mu(W_i)$,
\begin{gather*}
\{U_{IPCW}^{q+}(\mathcal{O};G_0)\} = E\{U_{DR}^q(\mathcal{O};G_0,S)\}  \\
= \frac{1}{h}E\bigg{[}\bigg{(}\frac{W_i - w_0}{h}\bigg{)}^q\bigg{(}\frac{1}{2}\mu''^{+}(w_0) \cdot (W_i - w_0)^2 + \zeta^Y(W_i)\bigg{)}L_{ih}^+\bigg{]}, 
\end{gather*}
where 
\begin{gather*}
\zeta^Y(w) = \mu(w) - \mu^+(w_0) - \mu'^{+}(w_0)(w - w_0) -\frac{1}{2}\mu''^{+}(w_0) \cdot (w - w_0)^2.
\end{gather*}
By condition (C6) and (C7), 
\begin{gather*}
E(U_{DR}^{q+}(\mathcal{O};\hat{G},\hat{S})) = \frac{1}{h}E\bigg{[}\bigg{(}\frac{W_i - w_0}{h}\bigg{)}^q\bigg{(}\frac{1}{2}\mu''^{+}(w_0) \cdot (W_i - w_0)^2 + \zeta^Y(W_i) + o_p(1) + o_p(1) \bigg{)}L_{ih}^+\bigg{]}
\end{gather*}

By Taylor series expansion, $\sup\limits_{w_0 < w \leq w_0+Bh} |\zeta^Y(w)| = o(h^2)$. Hence by the arguments of Hahn, Todd and Van der Klaauw (1999), 
\begin{gather*}
\frac{1}{h}E\bigg{[}\bigg{(}\frac{W_i - w_0}{h}\bigg{)}^q\bigg{(}\frac{1}{2}\mu''^{+}(w_0) (W_i - w_0)^2+ o_p(1)\bigg{)}L_{ih}^+ \bigg{]} = \\
= \frac{1}{2h}\mu''^{+}(w_0)  \int_{w_0}^{\infty} \bigg{(}\frac{W_i - w_0}{h}\bigg{)}^q (W_i - w_0)^2 K\bigg{(}\frac{W_i-w_0}{h}\bigg{)}g(w)dw + o(h^2) \\ = \delta_qg(w_0) h^2 + o(h^2).
\end{gather*} 
We can apply the same arguments for $\mathbf{A}_{DR^*,i,h}(\mathcal{O};\hat{G},\hat{S})$

Define $U_Z^{q} = \dfrac{1}{nh} \displaystyle \sum_{i=1}^n \bigg{(}\dfrac{W_i-w_0}{h}\bigg{)}^qZ_{i}^* L_{ih}^+$ and $\zeta^Z(w) = p(w) - p^+(w_0) - p'^{+}(w_0)(w - w_0) -\frac{1}{2}p''^{+}(w_0) \cdot (w - w_0)^2$. Clearly, $\sup\limits_{w_0 < w \leq w_0+Bh} |\zeta^Z(w)| = o(h^2)$. Hence
\begin{align*}
& E(U_Z^{q}) = E \bigg{(} \frac{1}{nh} \displaystyle \sum_{i=1}^n \bigg{(}\frac{W_i-w_0}{h}\bigg{)}^qZ_{i}^* L_{ih}^+ \bigg{)} \\
& = \frac{1}{h}E\bigg{[}\bigg{(}\frac{W_i - w_0}{h}\bigg{)}^q\bigg{(}\frac{1}{2}p''^{+}(w_0) (W_i - w_0)^2+\zeta^Z(W_i)\bigg{)}L_{ih}^+\bigg{]} \\
& = \frac{1}{2h}p''^{+}(w_0)  \int_{w_0}^{\infty} \bigg{(}\frac{W_i - w_0}{h}\bigg{)}^k (W_i - w_0)^2 K\bigg{(}\frac{W_i-w_0}{h}\bigg{)}g(w)dw + o(h^2) = \delta_qg(w_0) h^2 + o(h^2).
\end{align*}
\end{proof}  

\begin{lemma}
Let 
\begin{align*}
\mathbf{A}_{IPCW^*,h}^+(W_i;G) & = E(\mathbf{A}_{IPCW^*,i,h}^+(G)|W_i) \quad \mathbf{A}_{DR^*,h}^+(W_i;G,S) = E(\mathbf{A}_{DR^*,i,h}^+(G,S)|W_i) \\
 \mathbf{A}_{Z^*,h}^+(W_i) & = E(\mathbf{A}_{Z^*,i,h}^+|W_i) \quad \mathbf{A}_{IPCW^*,h}^-(W_i;G)  = E(\mathbf{A}_{IPCW^*,i,h}^-(G)|W_i) \\
 \mathbf{A}_{DR^*,h}^-(W_i;G,S) &= E(\mathbf{A}_{DR^*,i,h}^-(G,S)|W_i) \quad \mathbf{A}_{Z^*,h}^-(W_i) = E(\mathbf{A}_{Z^*,i,h}^-|W_i) 
\end{align*}
Then
\begin{gather*}
\frac{1}{nh} \sum_{i=1}^n \mathbf{A}_{IPCW^*,h}^+(W_i;\hat{G})  = E\bigg{\{}\frac{1}{nh} \sum_{i=1}^n \mathbf{A}_{IPCW^*,i,h}^+(G_0)\bigg{\}} + o_p(h^2) \\
\frac{1}{nh} \sum_{i=1}^n \mathbf{A}_{DR^*,h}^+(W_i;\hat{G},\hat{S}) = E\bigg{\{}\frac{1}{nh} \sum_{i=1}^n \mathbf{A}_{DR^*,i,h}^+(G_0,S^*)\bigg{\}} + o_p(h^2) \\
\frac{1}{nh} \sum_{i=1}^n \mathbf{A}_{Z^*,h}^+(W_i) =E\bigg{(}\frac{1}{nh} \sum_{i=1}^n \mathbf{A}_{Z^*,i,h}^+ \bigg{)} + o_p(h^2).
\end{gather*}

\end{lemma}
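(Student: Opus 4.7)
The plan is to prove each of the three statements by the same two-step decomposition: (i) replace the estimated nuisance functions $\hat G$ (and $\hat S$) inside $\mathbf A^+_{\cdot,h}(W_i;\cdot)$ by their limits $G_0$ (and $S^*$) at a cost of $o_p(h^2)$; (ii) apply a weak law of large numbers to replace the sample average of conditional means by its expectation, again at rate $o_p(h^2)$. The pivotal identity is the tower property $E[\mathbf A^+_{\cdot,h}(W_i;G_0)] = E[\mathbf A^+_{\cdot,i,h}(G_0)]$ (and its DR analogue), which is what makes the inner versus outer quantities on the right-hand side compatible. For the $Z$ statement only step (ii) is needed, since $\mathbf A^+_{Z^*,h}(W_i)$ involves no nuisance estimator.

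For step (i) in the IPCW case, write
\begin{gather*}
\frac{1}{nh}\sum_{i=1}^n\bigl[\mathbf A^+_{IPCW^*,h}(W_i;\hat G)-\mathbf A^+_{IPCW^*,h}(W_i;G_0)\bigr]
\;=\;\frac{1}{nh}\sum_{i=1}^n W_{ih}^{q}\,\varepsilon_{n,i}\,L_{ih}^+,
\end{gather*}
where $\varepsilon_{n,i}:=E[Y_{IPCW,i}(O_i;\hat G)-Y_{IPCW,i}(O_i;G_0)\mid W_i]$. The uniform consistency $\|\hat G-G_0\|_\infty=o_p(1)$ from (C6), combined with the integrability bounds (C1)--(C4), gives $\varepsilon_{n,i}=o_p(1)$ uniformly in $W_i$ via a linearization in the functional $G\mapsto E[\Delta Y/G(T)\mid W]$. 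A Chebyshev argument using (R9) applied with $V_n=\varepsilon_{n,i}$ and indices $j_1=q$, $j_2=1,2$ then bounds the above sum by $o_p(h^2)$. The DR case is analogous: one expands $Y_{DR}(O;\hat G,\hat S)-Y_{DR}(O;G_0,S^*)$ into two pieces, one driven by $\hat G-G_0$ and one by $\hat Q_Y-Q_Y(\cdot,\cdot,S^*)$, then invokes (C6), (C7) for uniform consistency and (R9) for each piece separately.

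For step (ii), observe that on the effective kernel support $W_i\in[w_0,w_0+Bh]$ dictated by $L_{ih}^+$, condition (R1) plus Taylor expansion gives
\begin{gather*}
\bigl|\mu(W_i)-\mu^+(w_0)-\mu'^+(w_0)(W_i-w_0)\bigr|=O(h^2),
\end{gather*}
so that $\mathbf A^+_{IPCW^*,h}(W_i;G_0)$ and $\mathbf A^+_{DR^*,h}(W_i;G_0,S^*)$ are each $O(h^2)\cdot L_{ih}^+$. With $E[(L_{ih}^+)^2]=O(h)$ from (R3), (R7), each summand has variance $O(h^5)$, so the i.i.d.\ average $(nh)^{-1}\sum$ has variance of order $h^5/(nh^2)=h^3/n$. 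Under (R8) with $h\sim n^{-1/5}$ this is $O(n^{-8/5})=o(h^4)$, and a Chebyshev bound gives $o_p(h^2)$. Combining with $E[\mathbf A^+_{IPCW^*,h}(W_i;G_0)]=E[\mathbf A^+_{IPCW^*,i,h}(G_0)]$ yields the claimed identity. The $Z$ version is identical, with $p(w)$ replacing $\mu(w)$ and no dependence on nuisance parameters.

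The principal obstacle is step (i): one must show that the perturbations $\varepsilon_{n,i}$ introduced by $\hat G$ and $\hat S$ are not merely $o_p(1)$ pointwise but average, after multiplication by the kernel weights $W_{ih}^q L_{ih}^+$ and division by $nh$, to $o_p(h^2)$ rather than the naive $o_p(1)$. This is where condition (R9) is essential, and the DR case is the more delicate of the two because $\hat Q_Y$ enters nonlinearly through the martingale integral $\int_0^{\tilde T} Q_{H(T)}(u,W)/G(u)\,dM_G(u)$, requiring careful separation of the two nuisance contributions before (R9) is applied to each in turn.
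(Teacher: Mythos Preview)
Your proposal is correct and takes essentially the same approach as the paper: both rely on the Taylor expansion $\mu(W_i)-\mu^+(w_0)-\mu'^+(w_0)(W_i-w_0)=O(h^2)$ on the kernel support, a Chebyshev/variance bound yielding $o_p(h^2)$ for the centered average, and uniform consistency of $\hat G,\hat S$ together with condition (R9) to control the nuisance perturbation $E_n^*$. The only difference is organizational---you separate the nuisance-replacement step (i) from the LLN step (ii) explicitly, whereas the paper writes the conditional mean as $\tfrac12\mu''^{+}(w_0)(W_i-w_0)^2+\zeta^Y(W_i)+E_n^*$ and carries out a single variance calculation with all three pieces lumped together before invoking the Hahn--Todd--Van der Klaauw bound.
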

\begin{proof}

\begin{gather*}
\frac{1}{nh} \sum_{i=1}^n \mathbf{A}_{DR^*,h}^+(W_i;\hat{G},\hat{S}) =  \frac{1}{nh} \sum_{i=1}^n \boldsymbol{X}_{i,h}L_{i,h}^+\bigg{(}\frac{1}{2}\mu''^{+}(w_0)(W_i - w_0)^2 + \zeta^Y(W_i) + E_n^*\bigg{)}
\end{gather*}
where $E_n^* = o_p(1)$ and $\mathbf{X}_{i,h} = \begin{pmatrix} 1 \\ \frac{W_i-w_0}{h} \end{pmatrix}$.
Then using similar calculations to those in Hahn, Todd and Van der Klaauw (1999), 
\begin{gather*}
Var\bigg{\{}\frac{1}{nh} \sum_{i=1}^n \bigg{(}\frac{W_i-w_0}{h}\bigg{)}^q L_{ih}^+ \bigg{(}\frac{1}{2}\mu''^{+}(w_0)(W_i - w_0)^2 + \zeta^Y(W_i)  + E_n^* \bigg{)}\bigg{\}} \\
= \frac{1}{nh} \bigg{[} E\bigg{\{} \bigg{(}\frac{W_i-w_0}{h}\bigg{)}^q L_{ih}^+ \bigg{(}\frac{1}{2}\mu''^{+}(w_0)(W_i - w_0)^2 + \zeta^Y(W_i) + E_n^*\bigg{)}\bigg{\}}^2 \\
- \bigg{[}E\bigg{\{}\bigg{(}\frac{W_i-w_0}{h}\bigg{)}^q L_{ih}^+ \bigg{(}\frac{1}{2}\mu''^{+}(w_0)(W_i - w_0)^2 + \zeta^Y(W_i) +  E_n^*\bigg{)}\bigg{\}}\bigg{]}^2 \bigg{]}.
\end{gather*}
Then we can choose constant $\varrho$ such that
\begin{gather*}
\frac{1}{nh} \bigg{[} E\bigg{\{} \bigg{(}\frac{W_i-w_0}{h}\bigg{)}^q L_{ih}^+ \bigg{(}\frac{1}{2}\mu''^{+}(w_0)(W_i - w_0)^2 + \zeta^Y(W_i) + E_n^* \bigg{)}\bigg{\}}^2 \\
- \bigg{[}E\bigg{\{}\bigg{(}\frac{W_i-w_0}{h}\bigg{)}^q L_{ih}^+ \bigg{(}\frac{1}{2}\mu''^{+}(w_0)(W_i - w_0)^2 + \zeta^Y(W_i) + E_n^* \bigg{)}\bigg{\}}\bigg{]}^2 \bigg{]} \\
\leq \frac{1}{nh} \varrho E\bigg{\{} \bigg{(}\frac{W_i-w_0}{h}\bigg{)}^{2q} (L_{ih}^+)^2 \bigg{(}(W_i - w_0)^4 + \{\zeta^Y(W_i)\}^2 \bigg{)}\bigg{\}}^2.
\end{gather*}
From Hahn, Todd and Van der Klaauw (1999), we have that 
\begin{gather*}
\frac{1}{nh} \varrho E\bigg{\{} \bigg{(}\frac{W_i-w_0}{h}\bigg{)}^{2q} (L_{ih}^+)^2 \bigg{(}(W_i - w_0)^4 + \{\zeta^Y(W_i)\}^2 \bigg{)}\bigg{\}}^2 = o(h^2).
\end{gather*}
Hence
\begin{gather*}
\frac{1}{nh} \sum_{i=1}^n \mathbf{A}_{DR^*,h}^+(W_i;\hat{G},\hat{S})  = E\bigg{\{}\frac{1}{nh} \sum_{i=1}^n \mathbf{A}_{DR^*,i,h}^+(G_0,S^*)\bigg{\}} + o_p(h^2).
\end{gather*}
Similar arguments can be applied to $\dfrac{1}{nh} \displaystyle \sum_{i=1}^n \mathbf{A}_{IPCW^*,h}^+(W_i;\hat{G})$ and $\dfrac{1}{nh} \displaystyle \sum_{i=1}^n \mathbf{A}_{Z^*,h}^+(W_i)$.
\end{proof}
\begin{lemma}
Let $v_q = \int_0^{\infty} u^q\{K(u)\}^2, q=0,1,2$ and
\begin{gather*}
\mathbf{\bar{A}}_{IPCW^*,h}^+(\mathcal{O};\hat{G}) =  \frac{1}{nh} \sum_{i=1}^n \{\mathbf{A}_{IPCW^*,i,h}^+(O_i;\hat{G}) - \mathbf{A}_{IPCW^*,h}^+(W_i;\hat{G})\} \\
\mathbf{\bar{A}}_{DR^*,h}^+(\mathcal{O};\hat{G},\hat{S}) = \frac{1}{nh} \sum_{i=1}^n \{\mathbf{A}_{DR^*,i,h}^+(O_i;\hat{G},\hat{S}) - \mathbf{A}_{DR^*,h}^+(W_i;\hat{G},\hat{S})\} \\
\mathbf{\bar{A}}_{Z^*,h}^+ = \frac{1}{nh} \sum_{i=1}^n \{\mathbf{A}_{Z^*,i,h}^+ - \mathbf{A}_{Z^*,h}^+(W_i)\}. 
\end{gather*}
Then
\begin{gather*}
\hspace*{-15mm}
Var\{\mathbf{\bar{A}}_{IPCW^*,h}^+(\mathcal{O};G_0)\} = \frac{1}{nh}\sigma_{IPCW}^{2+}(w_0;G_0)g(w_0)\mathcal{V} \quad Var\{\mathbf{\bar{A}}_{DR^*,h}^+(\mathcal{O};G_0,S)\} = \frac{1}{nh}\sigma_{DR}^{2+}(w_0;G_0,S)g(w_0)\mathcal{V} \\
Var(\mathbf{\bar{A}}_{Z^*,h}^+) = \frac{1}{nh}p^+(w_0)\{1-p^+(w_0)\}g(w_0)\mathcal{V} \\
Cov(\mathbf{\bar{A}}_{IPCW^*,h}^+(\mathcal{O};G_0), \mathbf{\bar{A}}_{Z^*,h}^+) = \frac{1}{nh}\eta_{IPCW}^{+}(w_0;G_0)g(w_0)\mathcal{V} \\
 Cov(\mathbf{\bar{A}}_{DR^*,h}^+(\mathcal{O};G_0,S), \mathbf{\bar{A}}_{Z^*,h}^+) = \frac{1}{nh}\eta_{DR}^{+}(w_0;G_0,S)g(w_0)\mathcal{V}, 
\end{gather*}
where 
\[
\mathcal{V} =
\begin{pmatrix}
v_0 + o(1) & v_1 + o(1) \\
v_1 + o(1) & v_2 + o(1)
\end{pmatrix}. 
\]
\end{lemma}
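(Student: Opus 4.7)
The plan is to exploit two facts: first, that the $O_i$ are i.i.d., so that variances of $\frac{1}{nh}\sum_{i=1}^n$ averages collapse to $\frac{1}{n^2h^2}$ times a single-observation variance summed $n$ times; and second, that by construction $\mathbf{A}_{\star,h}^+(W_i) = E[\mathbf{A}_{\star,i,h}^+ \mid W_i]$, so the centered summand $\mathbf{A}_{\star,i,h}^+ - \mathbf{A}_{\star,h}^+(W_i)$ has unconditional mean zero and its variance equals $E[\operatorname{Var}(\mathbf{A}_{\star,i,h}^+ \mid W_i)]$ by the law of total variance. Thus
\begin{gather*}
\operatorname{Var}\{\mathbf{\bar{A}}_{DR^*,h}^+(\mathcal{O};G_0,S)\} \;=\; \frac{1}{nh^2}\, E\!\left[\operatorname{Var}\!\left(\mathbf{A}_{DR^*,1,h}^+(O_1;G_0,S)\,\middle|\,W_1\right)\right],
\end{gather*}
and analogous expressions hold for the IPCW and $Z^*$ cases, and for the two cross-covariances.

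Next I would evaluate the conditional (co)variances entry by entry. Because the shifts $\mu^+(w_0) + \mu'^+(w_0)(W_1-w_0)$ and $p^+(w_0)+p'^+(w_0)(W_1-w_0)$ used to form $Y_{DR^*}$, $Y_{IPCW^*}$ and $Z^*$ are deterministic functions of $W_1$, they drop out of the conditional variance and covariance. Hence the $(r,s)$ entry of $\operatorname{Var}(\mathbf{A}_{DR^*,1,h}^+ \mid W_1)$ is
\begin{gather*}
\left(\frac{W_1-w_0}{h}\right)^{\!r+s}\!\! (L_{1h}^+)^2\, \sigma_{DR}^2(W_1;G_0,S),
\end{gather*}
with $r,s\in\{0,1\}$; similarly the conditional covariance between the DR and $Z^*$ integrands carries a factor $\eta_{DR}(W_1;G_0,S)$, and the conditional variance of $Z_1^*$ carries a factor $p(W_1)(1-p(W_1))$.

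Taking expectations over $W_1$ and performing the change of variable $u=(w-w_0)/h$ on the domain $W_1\geq w_0$ converts each entry into
\begin{gather*}
h\int_0^{\infty} u^{r+s}\{K(u)\}^2\, \sigma_{DR}^2(w_0+hu;G_0,S)\, g(w_0+hu)\, du.
\end{gather*}
By the smoothness assumptions (R3)--(R5) on $g$, $\sigma_{DR}^2$, $\sigma_{IPCW}^2$, $\eta_{DR}$ and $\eta_{IPCW}$, together with continuity of $p$, the integrand converges to its one-sided limit at $w_0$; the dominated convergence theorem (valid because $K$ has compact support by (R7), so the integrand is dominated by a bounded multiple of $u^{r+s}\{K(u)\}^2$) then delivers $h\,g(w_0)\sigma_{DR}^{2+}(w_0;G_0,S)\,v_{r+s}\,(1+o(1))$. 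Dividing by $nh^2$ yields the $\frac{1}{nh}$ scaling and assembles the $2\times 2$ matrix $\mathcal{V}$ whose $(r,s)$ entry is $v_{r+s}+o(1)$. The covariance statements are proved by the identical argument with $\sigma_{DR}^{2+}$ replaced by $\eta_{DR}^+$ (and analogously for IPCW), and the $Z^*$ variance by replacing it with $p^+(w_0)\{1-p^+(w_0)\}$.

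The main obstacle I anticipate is a bookkeeping one: verifying that the three contributions---the IPCW piece, the martingale integral piece inside $Y_{DR}$, and the cross terms with $Z$---can all be rolled into a single conditional moment $\sigma_{DR}^2(W_1;G_0,S)$ or $\eta_{DR}(W_1;G_0,S)$ without extra boundary terms. This is where condition (R6) (existence of third absolute moments of $Y_{DR,i}-\mu(W_i)$ near $w_0$) is used to justify both the dominated convergence step and the finiteness of the limiting variance, and where (R4)--(R5) guarantee the one-sided limits are well-defined and finite. Once those are in place the argument is essentially the standard one-sided local-linear variance calculation, and the full display of $\mathcal{V}$ follows by reading off the three values $r+s\in\{0,1,2\}$.
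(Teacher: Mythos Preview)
Your argument is correct and the core computation---reducing to a single-observation conditional variance via the i.i.d.\ structure and the law of total variance, then changing variables $u=(w-w_0)/h$ and invoking continuity at $w_0$---is exactly what the paper does in the second half of its proof. The only substantive difference is that the paper begins by working with the \emph{estimated} nuisance functions $\hat G,\hat S$: it writes $\mathbf{\bar A}_{DR^*,h}^+(\mathcal O;\hat G,\hat S)$ as $\frac{1}{nh}\sum_i\{\mathbf A_{i,DR}^{ctd,+}(O_i;G_0,S^*) + \mathbf X_{i,h}L_{ih}^+E_n^*\}$ with $E_n^*=o_p(1)$ coming from the uniform consistency conditions (C6)--(C7), and then uses assumption (R9) to show that the extra $E_n^*$ term contributes only $o(1)$ to the variance. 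You instead plug in $G_0,S$ from the outset, which is legitimate since the displayed conclusions of the lemma are stated for $G_0$ and $S$ rather than for $\hat G,\hat S$. Your route is cleaner for the lemma as written; the paper's detour through $\hat G,\hat S$ is really preparation for Lemmas~5--6 and Theorem~1, where the estimated versions are what actually appear. One minor remark: the dominated-convergence step is already fully justified by (R4) (uniform boundedness of $\sigma_{DR}^2,\eta_{DR}$ near $w_0$) and (R7) (compact support of $K$); you do not need (R6) here, which is reserved for the Lyapounov condition in Lemma~5.
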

\begin{proof}
Let $q=0,1,2$ and 
\begin{gather*}
\mathbf{A}_{i,IPCW}^{ctd,+}(O_i;G) = \begin{pmatrix} W_{ih}^0 L_{ih}^+ (Y_{IPCW,i}(O_i;G) - \mu(W_i)) \\ W_{ih}^1 L_{ih}^+ (Y_{IPCW,i}(O_i;G) - \mu(W_i)) \end{pmatrix} \\
\mathbf{A}_{i,DR}^{ctd,+}(O_i;G,S) = \begin{pmatrix} W_{ih}^0 L_{ih}^+ (Y_{DR,i}(O_i;G,S) - \mu(W_i)) \\ W_{ih}^1 L_{ih}^+ (Y_{DR,i}(O_i;G,S) - \mu(W_i)) \end{pmatrix}. 
\end{gather*}
By expansions, 
\begin{gather}
\mathbf{\bar{A}}_{DR^*,h}^+(\mathcal{O};\hat{G},\hat{S}) = \frac{1}{nh} \sum_{i=1}^n \{\mathbf{A}_{DR^*,i,h}^+(O_i;\hat{G},\hat{S}) - \mathbf{A}_{DR^*,h}^+(W_i;\hat{G},\hat{S})\} \nonumber \\
= \frac{1}{nh} \sum_{i=1}^n \{\mathbf{A}_{DR^*,i,h}^+(O_i;\hat{G},\hat{S}) - \mathbf{A}_{DR^*,h}^+(W_i;G_0,S^*) + \mathbf{A}_{DR^*,h}^+(W_i;G_0,S^*) - \mathbf{A}_{DR^*,h}^+(W_i;\hat{G},\hat{S})\} \nonumber \\
= \frac{1}{nh} \sum_{i=1}^n \mathbf{X}_{i,h}L_{ih}^+\{Y_{DR^*,i}(O_i;\hat{G},\hat{S}) - E(Y_{DR^*,i}(O_i;G_0,S^*)|W_i) \nonumber \\ + E(Y_{DR^*,i}(O_i;G_0,S^*)|W_i) - E(Y_{DR^*,i}(O_i;\hat{G},\hat{S})|W_i)\} \nonumber \\
= \frac{1}{nh} \sum_{i=1}^n \mathbf{X}_{i,h}L_{ih}^+\{Y_{DR^*,i}(O_i;\hat{G},\hat{S}) - E(Y_{DR^*,i}(O_i;G_0,S^*)|W_i) \nonumber \\ + E(Y_{DR^*,i}(O_i;G_0,S^*)|W_i) - E(Y_{DR^*,i}(O_i;\hat{G},\hat{S})|W_i)\}
\label{eq:eqlem4}
\end{gather}
By consistency of $\hat{G}$ and $\hat{S}$, 
\begin{gather*}
\eqref{eq:eqlem4} = \frac{1}{nh} \sum_{i=1}^n \{\mathbf{X}_{i,h}L_{ih}^+\{Y_{DR^*,i}(O_i;\hat{G},\hat{S}) - E(Y_{DR^*,i}(O_i;G_0,S^*)|W_i) +E_n^*\} \\
= \frac{1}{nh} \sum_{i=1}^n \{\mathbf{A}_{i,DR}^{ctd,+}(O_i;G_0,S^*) +\mathbf{X}_{i,h}L_{ih}^+ E_n^*\}
\end{gather*}
By looking at variances of the individual terms and and assumption (R9),
\begin{gather*}
Var \bigg{\{}\frac{1}{nh} \sum_{i=1}^n \mathbf{\bar{A}}_{DR^*,h}^+(\mathcal{O};\hat{G},\hat{S}) \bigg{\}} = Var\bigg{\{}\frac{1}{nh} \sum_{i=1}^n \{\mathbf{A}_{i,DR}^{ctd,+}(O_i;G_0,S^*)  +\mathbf{X}_{i,h}L_{ih}^+ E_n^*\} \bigg{\}}\\
= \frac{1}{(nh)^2} \bigg{[}\int_{w_0}^{\infty} \bigg{(}\frac{w-w_0}{h} \bigg{)}^{2q} \bigg{\{}I(W_i \geq w_0)K\bigg{(}\frac{w-w_0}{h}\bigg{)}\bigg{\}}^2 \sigma_{DR}^{2+}(w;G_0,S^*) g(w)dw + O(1) + O(1)\bigg{]}\\
= \frac{1}{nh} \{\sigma_{DR}^{2+}(w_0;G_0,S^*)v_q + o(1) + o(1)\} = \frac{1}{nh} \{\sigma_{DR}^{2+}(w_0;G_0,S^*)v_q + o(1)\}. 
\end{gather*}
Note that the covariance term between $\mathbf{A}_{i,DR}^{ctd,+}(O_i;G_0,S^*)$ and $\mathbf{X}_{i,h}L_{ih}^+ E_n^*$ is finite, hence it is $O(1)$. Similarly, 
\begin{gather*}
Var \bigg{\{}\frac{1}{nh} \sum_{i=1}^n \mathbf{\bar{A}}_{IPCW^*,h}^+(\mathcal{O};\hat{G})\bigg{\}} = \frac{1}{nh} \{\sigma_{IPCW}^{2+}(w_0;G_0)v_q + o(1)\}.
\end{gather*}
Derivation of the variance of $\mathbf{\bar{A}}_{Z^*,h}^+$ is similar. Next, for cross-covariance quantities, we only prove for individual terms of $Cov(\mathbf{\bar{A}}_{DR^*,h}^+(\mathcal{O};\hat{G},\hat{S}), \mathbf{\bar{A}}_{Z^*,h}^+)$. Note that 
\begin{gather*}
\frac{1}{nh^2} \int_{w_0}^{\infty} \bigg{(}\frac{w-w_0}{h} \bigg{)}^{2q} \bigg{\{}I(W_i \geq w_0)K\bigg{(}\frac{w-w_0}{h}\bigg{)}\bigg{\}}^2 \eta_{DR}(w) g(w)dw \\
= \frac{1}{nh}\{\eta_{DR}^+(w_0;G_0,S^*)v_q+o(1)\}.
\end{gather*}
Hence the result follows.
\end{proof}
\begin{lemma}
\begin{gather*}
\sqrt{nh} \begin{pmatrix} \mathbf{\bar{A}}_{IPCW^*,h}^+(\mathcal{O};G_0) \\ \mathbf{\bar{A}}_{Z^*,h}^+ \end{pmatrix}
\overset{d}{\longrightarrow} \{g(w_0)\}\mathcal{N}\bigg{(}0,\begin{pmatrix} \sigma_{IPCW}^{2+}(w_0;G_0)\mathcal{V} & \eta_{IPCW}^{+}(w_0;G_0)\mathcal{V} \\ \eta_{IPCW}^{+}(w_0;G_0)\mathcal{V} & p^+(w_0)(1-p^+(w_0))\mathcal{V} \end{pmatrix} \bigg{)} \\
\sqrt{nh} \begin{pmatrix} \mathbf{\bar{A}}_{DR^*,h}^+(\mathcal{O};G_0,S) \\ \mathbf{\bar{A}}_{Z^*,h}^+ \end{pmatrix}
\overset{d}{\longrightarrow} \{g(w_0)\}\mathcal{N}\bigg{(}0,\begin{pmatrix} \sigma_{DR}^{2+}(w_0;G_0,S)\mathcal{V} & \eta_{DR}^{+}(w_0;G_0,S)\mathcal{V} \\ \eta_{DR}^{+}(w_0;G_0,S)\mathcal{V} & p^+(w_0)(1-p^+(w_0))\mathcal{V} \end{pmatrix} \bigg{)}. 
\end{gather*}
\end{lemma}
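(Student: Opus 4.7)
The plan is to establish the joint central limit theorem via the Cramér--Wold device combined with Lyapunov's CLT for triangular arrays. By construction each summand of $\mathbf{\bar{A}}_{IPCW^*,h}^+(\mathcal{O};G_0)$, $\mathbf{\bar{A}}_{DR^*,h}^+(\mathcal{O};G_0,S)$, and $\mathbf{\bar{A}}_{Z^*,h}^+$ is mean zero, since we subtract the conditional expectation given $W_i$, and these summands are iid across $i$ because the $O_i$ are iid. The only dependence on $n$ enters through the bandwidth $h=h_n$, placing us squarely in the triangular-array setting.

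For the multivariate limit, I would invoke Cramér--Wold and show that for every $\lambda \in \mathbb{R}^{4}$, the scalar combination $\lambda^T \sqrt{nh}\,\bigl((\mathbf{\bar{A}}_{DR^*,h}^+(\mathcal{O};G_0,S))^T,\, (\mathbf{\bar{A}}_{Z^*,h}^+)^T\bigr)^T$ is asymptotically normal with variance $\lambda^T \Sigma \lambda$, where $\Sigma$ is the claimed block covariance matrix. Convergence of the variance and of the cross-covariance $g(w_0)\eta_{DR}^+(w_0;G_0,S)\mathcal{V}$ to the stated blocks is already supplied by Lemma 4, so the only nontrivial task is the univariate CLT; the IPCW block is treated in parallel.

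The substantive step is verification of Lyapunov's condition of order $2+\delta$ with $\delta=1$. Writing the rescaled summand as $X_{n,i} = \sqrt{h/n}\, h^{-1}\lambda^T\{\mathbf{A}_{\ast,i,h}^+ - E(\mathbf{A}_{\ast,i,h}^+\mid W_i)\}$, the indicator $L_{ih}^+$ confines $W_i$ to an interval of width $O(h)$, and the bounded kernel combined with the finite third conditional moments supplied by (R6) yields $E|\lambda^T \mathbf{A}_{\ast,i,h}^+|^{3} = O(h)$. Tracking this through the scaling gives $\sum_{i=1}^n E|X_{n,i}|^{3} = n\cdot(h/n)^{3/2}\cdot O(h^{-2}) = O((nh)^{-1/2})$, which vanishes since $nh \sim n^{4/5}\to\infty$ under (R8). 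Dividing by the $O(1)$ limit of the variance raised to the power $3/2$ shows that the Lyapunov ratio tends to zero and univariate asymptotic normality follows.

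The main obstacle is the moment bookkeeping: one must track carefully how the third absolute moment of the centered kernel-weighted summand scales with $h_n$ so that Lyapunov's condition survives while the variance limit from Lemma 4 is preserved exactly. Because the cross-covariance structure is essential for the downstream inference on $\hat{\tau}_{FRD}$ (and ultimately for Theorem 1), the CLT must be applied jointly rather than marginally, which is precisely why the Cramér--Wold reduction is the appropriate vehicle. The companion claim for the IPCW block is obtained identically, using the analogous moment bound from (R6) for $Y_{IPCW,i}(O_i;G_0)$.
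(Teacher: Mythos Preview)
Your proposal is correct and follows essentially the same route as the paper: the Cram\'er--Wold device reduces the joint CLT to a scalar one, Lemma~4 supplies the limiting covariance structure, and asymptotic normality is obtained by verifying Lyapunov's third-moment condition, which your $O((nh)^{-1/2})$ calculation matches. The only cosmetic difference is that the paper carries the estimated $\hat{G},\hat{S}$ through the Lyapunov bound and then invokes (C6)--(C7) and (R9) to reduce to $G_0,S^*$, whereas you work directly with the fixed $G_0,S$ that appear in the lemma statement; your version is slightly cleaner and fully adequate.
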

\begin{proof}
For any $\mathbf{b} \in \mathbb{R}^4$, we will show that
\begin{gather*}
\mathbf{a}^T \begin{pmatrix} \mathbf{\bar{A}}_{DR^*,h}^+(\mathcal{O};\hat{G},\hat{S}) \\ \mathbf{\bar{A}}_{Z^*,h}^+ \end{pmatrix} \overset{d}{\longrightarrow} \mathbf{a}^T \mathcal{X},
\end{gather*}
where $\mathcal{X}$ has multivariate normal distribution with mean 0 and covariance matrix 
\begin{gather*}
\begin{pmatrix} \sigma_{DR}^{2+}(w_0;G_0,S^*)\mathcal{V} & \eta_{DR}^{+}(w_0;G_0,S^*)\mathcal{V} \\ \eta_{IPCW}^{+}(w_0;G_0)\mathcal{V} & p^+(w_0)(1-p^+(w_0))\mathcal{V}
\end{pmatrix}.
\end{gather*}
As in Hahn, Todd and Van der Klaauw (1999), by using the Lyapounov condition with only considering individual terms, we would like to show that
\begin{gather*}
\frac{1}{\sqrt{nh}}\frac{1}{h} E\bigg{[}(L_{ih}^+)^3\bigg{(}\dfrac{W_i - w_0}{h}\bigg{)}^{3q}|Y_{DR,i}(O_i;\hat{G},\hat{S}) - \mu(W_i)|^3\bigg{]} = o(1)
\end{gather*}
By expansion, 
\begin{gather}
\frac{1}{h}E\bigg{[}(L_{ih}^+)^3\bigg{(}\dfrac{W_i - w_0}{h}\bigg{)}^{3q}|Y_{DR,i}(O_i;\hat{G},\hat{S}) - \mu(W_i)|^3\bigg{]} \nonumber \\
= \frac{1}{h}E\bigg{[}(L_{ih}^+)^3\bigg{(}\dfrac{W_i - w_0}{h}\bigg{)}^{3q}|Y_{DR,i}(O_i;\hat{G},\hat{S}) - Y_{DR,i}(O_i;G_0,S^*) + Y_{DR,i}(O_i;G_0,S^*) - \mu(W_i)|^3\bigg{]} \nonumber \\
= \frac{1}{h}E\bigg{[}(L_{ih}^+)^3\bigg{(}\dfrac{W_i - w_0}{h}\bigg{)}^{3q}\{|Y_{DR,i}(O_i;\hat{G},\hat{S}) - Y_{DR,i}(O_i;G_0,S^*)|^3 \nonumber \\
+ 3\{Y_{DR,i}(O_i;\hat{G},\hat{S}) - Y_{DR,i}(O_i;G_0,S^*)\}^2|Y_{DR,i}(O_i;G_0,S^*) - \mu(W_i)| \nonumber \\
+ 3|Y_{DR,i}(O_i;\hat{G},\hat{S}) - Y_{DR,i}(O_i;G_0,S^*)|\{Y_{DR,i}(O_i;G_0,S^*) - \mu(W_i)\}^2 \nonumber \\
+|Y_{DR,i}(O_i;G_0,S^*) - \mu(W_i)|^3\}\bigg{]}
\label{eq:eqlem5}
\end{gather}
By consistency of $\hat{G}$ and $\hat{S}$, and by assumptions (R6) and (R9),
\begin{gather*}
\eqref{eq:eqlem5}= \int_{0}^{\infty} u^{3q} \{K(u)\}^3 \iota(w_0 + uh) g(w_0+uh) du + O(1).
\end{gather*}
where $\iota(w) = E\{|Y_{DR,i}(O_i;G_0,S^*) - \mu(W_i)|^3|W_i = w\}$. Hence 
\begin{gather*}
\frac{1}{\sqrt{nh}}\frac{1}{h}E\bigg{[}(L_{ih}^+)^3\bigg{(}\dfrac{W_i - w_0}{h}\bigg{)}^{3q}|Y_{IPCW,i}(O_i;G_0) - \mu(W_i)|^3\bigg{]} = o(1).
\end{gather*}
Similarly, $\dfrac{1}{\sqrt{nh}}\dfrac{1}{h}E\bigg{[}(L_{ih}^+)^3\bigg{(}\dfrac{W_i - w_0}{h}\bigg{)}^{3q}|Z_{i} - p(W_i)|^3\bigg{]} = o(1)$. Hence the conditional asymptotic normality of $\begin{pmatrix} \mathbf{\bar{A}}_{DR^*,h}^+(\mathcal{O};G_0,S^*) \\ \mathbf{\bar{A}}_{Z^*,h}^+ \end{pmatrix}$ follows by the Cram\'er-Wold theorem. Similarly, 
\begin{gather*}
\frac{1}{\sqrt{nh}}\frac{1}{h}E\bigg{[}(L_{ih}^+)^3\bigg{(}\frac{W_i - w_0}{h}\bigg{)}^{3q}|Y_{IPCW,i}(\mathcal{O};\hat{G}) - \mu(W_i)|^3\bigg{]} = o(1). 
\end{gather*}
Hence the result for $\begin{pmatrix} \mathbf{\bar{A}}_{IPCW^*,h}^+(\mathcal{O};\hat{G}) \\ \mathbf{\bar{A}}_{Z^*,h}^+ \end{pmatrix}$ follows. 
\end{proof} 
\begin{lemma}
\begin{gather*}
\frac{1}{\sqrt{nh}}\sum_{i=1}^n \begin{pmatrix} \mathbf{A}_{IPCW^*,i,h}^+(O_i;\hat{G}) \\ \mathbf{A}_{Z^*,i,h}^+ \end{pmatrix} - \frac{1}{2}n^{1/2}h^{5/2}g(w_0)\begin{pmatrix} \mu''^+(w_0)\boldsymbol{\delta} \\ p''^+(w_0)\boldsymbol{\delta} \end{pmatrix} \\
\overset{d}{\longrightarrow} \{g(w_0)\}^{1/2}\mathcal{N}\bigg{(}0, \begin{pmatrix} \sigma_{IPCW}^{2+}(w_0;G_0)\mathcal{V} & \eta_{IPCW}^{+}(w_0;G_0)\mathcal{V} \\ \eta_{IPCW}^{+}(w_0;G_0)\mathcal{V} & p^+(w_0)(1-p^+(w_0))\mathcal{V} \end{pmatrix} \bigg{)} \\
\frac{1}{\sqrt{nh}}\sum_{i=1}^n \begin{pmatrix} \mathbf{A}_{DR^*,i,h}^+(O_i;\hat{G},\hat{S}) \\ \mathbf{A}_{Z^*,i,h}^+ \end{pmatrix} - \frac{1}{2}n^{1/2}h^{5/2}g(w_0)\begin{pmatrix} \mu''^+(w_0)\boldsymbol{\delta} \\ p''^+(w_0)\boldsymbol{\delta} \end{pmatrix} \\
\overset{d}{\longrightarrow} \{g(w_0)\}^{1/2}\mathcal{N}\bigg{(}0, \begin{pmatrix} \sigma_{DR}^{2+}(w_0;G_0,S^*)\mathcal{V} & \eta_{DR}^{+}(w_0;G_0,S^*)\mathcal{V} \\ \eta_{DR}^{+}(w_0;G_0,S)\mathcal{V} & p^+(w_0)(1-p^+(w_0))\mathcal{V} \end{pmatrix} \bigg{)}. 
\end{gather*}
\end{lemma}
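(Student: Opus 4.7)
The plan is to decompose the partial sums in the usual way: write $\frac{1}{\sqrt{nh}}\sum_{i=1}^n \mathbf{A}_{DR^*,i,h}^+(O_i;\hat G,\hat S)$ as $\sqrt{nh}\,\mathbf{\bar A}_{DR^*,h}^+(\mathcal{O};\hat G,\hat S) + \frac{1}{\sqrt{nh}}\sum_{i=1}^n \mathbf{A}_{DR^*,h}^+(W_i;\hat G,\hat S)$, handle the second (mean-like) term by combining Lemmas~2 and~3 to extract the bias, and handle the first (centered) term by invoking Lemma~5. The joint result with $\mathbf{A}_{Z^*,i,h}^+$ follows by stacking these two decompositions. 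The same argument applies verbatim to the IPCW case.

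First I would deal with the bias term. Lemma~3 gives
\begin{gather*}
\frac{1}{nh}\sum_{i=1}^n \mathbf{A}_{DR^*,h}^+(W_i;\hat G,\hat S) = E\!\left\{\frac{1}{nh}\sum_{i=1}^n \mathbf{A}_{DR^*,i,h}^+(G_0,S^*)\right\} + o_p(h^2),
\end{gather*}
and Lemma~2 evaluates the expectation on the right as $\tfrac12 g(w_0)\mu''^+(w_0)h^2(\boldsymbol{\delta}+o(1))$. Multiplying through by $\sqrt{nh}$ and using $h\sim n^{-1/5}$, so that $\sqrt{nh}\cdot h^2 = n^{1/2}h^{5/2}=O(1)$, the $o_p(h^2)$ and $o(h^2)$ remainders become $o_p(1)$ and $o(1)$ respectively, leaving the deterministic bias $\tfrac12 n^{1/2}h^{5/2}g(w_0)\mu''^+(w_0)\boldsymbol{\delta}$. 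The identical argument applied to $\mathbf{A}_{Z^*,i,h}^+$ produces the companion bias $\tfrac12 n^{1/2}h^{5/2}g(w_0)p''^+(w_0)\boldsymbol{\delta}$.

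Next I would address the centered term. Lemma~5 gives the joint asymptotic normality of $\sqrt{nh}\bigl(\mathbf{\bar A}_{DR^*,h}^+(\mathcal{O};G_0,S^*),\,\mathbf{\bar A}_{Z^*,h}^+\bigr)^T$ with the covariance structure written in the target statement. To pass from $G_0,S^*$ to $\hat G,\hat S$ in the DR component, I would plug in the expansion used in the proof of Lemma~4: $Y_{DR^*,i}(O_i;\hat G,\hat S) = Y_{DR^*,i}(O_i;G_0,S^*) + E_n^*$ with $E_n^* = o_p(1)$ under conditions (C6)--(C7), and then use assumption (R9) to control the $\sqrt{nh}$-scaled moments of the difference, showing they are $o_p(1)$. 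Slutsky's lemma then transfers the limiting distribution from $(G_0,S^*)$ to $(\hat G,\hat S)$.

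Finally I would assemble the pieces by a joint Slutsky argument: the bias term converges in probability to the stated deterministic vector, and the centered term converges in distribution to the claimed multivariate normal; subtracting the bias on the left-hand side cancels the deterministic drift and leaves the Gaussian limit. The main obstacle I anticipate is the bookkeeping of the $o_p$ and $o$ remainders through the multiplication by $\sqrt{nh}$: because $n^{1/2}h^{5/2}$ is exactly $O(1)$ under $h\sim n^{-1/5}$, the bias persists and must be subtracted explicitly, so the arguments in Lemmas~2 and~3 must be sharp enough that the remainders are genuinely $o(h^2)$ and $o_p(h^2)$ rather than merely $O(h^2)$. Assumption (R9) plays the key role here by ensuring that the contributions coming from replacing $(G_0,S^*)$ with $(\hat G,\hat S)$ do not contaminate the leading bias or the Gaussian limit. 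Once that uniform control is in place, everything else reduces to an application of Slutsky together with the Cram\'er--Wold device already used in Lemma~5.
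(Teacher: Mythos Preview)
Your proposal is correct and follows essentially the same route as the paper: decompose into the centered piece $\sqrt{nh}\,\mathbf{\bar A}^+$ plus the conditional-mean piece, use Lemmas~2 and~3 to extract the $\tfrac12 n^{1/2}h^{5/2}g(w_0)\boldsymbol{\delta}$ bias (with $h\sim n^{-1/5}$ killing the remainders), invoke Lemma~5 for the Gaussian limit of the centered piece, and combine via Slutsky. Your discussion of passing from $(G_0,S^*)$ to $(\hat G,\hat S)$ via the $E_n^*=o_p(1)$ expansion and (R9) is exactly the mechanism the paper uses, though the paper absorbs that step into the proofs of Lemmas~4--5 rather than revisiting it here.
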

\begin{proof}
Note that 
\begin{gather*}
\frac{1}{\sqrt{nh}}\sum_{i=1}^n \begin{pmatrix} \mathbf{A}_{DR^*,i,h}^+(O_i;\hat{G},\hat{S}) \\ \mathbf{A}_{Z^*,i,h}^+ \end{pmatrix} = \sqrt{nh} \begin{pmatrix} \mathbf{\bar{A}}_{DR^*,h}^+(\mathcal{O};\hat{G},\hat{S}) \\ \mathbf{\bar{A}}_{Z^*,h}^+ \end{pmatrix} + \frac{1}{\sqrt{nh}} \sum_{i=1}^n \begin{pmatrix} \mathbf{A}_{DR^*,h}^+(W_i;\hat{G},\hat{S}) \\ \mathbf{A}_{Z^*,h}^+(W_i) \end{pmatrix}.
\end{gather*}
By Lemma 1 and 2,
\begin{gather*} 
\frac{1}{\sqrt{nh}}
\sum_{i=1}^n
\begin{pmatrix}
\mathbf{A}_{DR^*,h}^+(W_i;\hat{G},\hat{S}) \\
\mathbf{A}_{Z^*,h}^+(W_i) \\ 
\end{pmatrix}
=
\begin{pmatrix}
\frac{1}{2}g(w_0)(nh)^{1/2}\mu''^{+}(w_0)h^2(\boldsymbol{\delta} + o(1)) + (nh)^{1/2}o_p(h^2) \\
\frac{1}{2}g(w_0)(nh)^{1/2}p''^{+}(w_0)h^2(\boldsymbol{\delta} + o(1)) + (nh)^{1/2}o_p(h^2) 
\end{pmatrix}.
\end{gather*}
Since $h \sim n^{-1/5}$, we obtain
\begin{gather*} 
\frac{1}{\sqrt{nh}} \sum_{i=1}^n
\begin{pmatrix}
\mathbf{A}_{DR^*,h}^+(W_i;\hat{G},\hat{S}) \\
\mathbf{A}_{Z^*,h}^+(W_i)
\end{pmatrix}
\overset{p}{\longrightarrow} 
\begin{pmatrix}
\frac{1}{2}g(w_0)(nh)^{1/2}\mu''^{+}(w_0)h^2\boldsymbol{\delta}  \\
\frac{1}{2}g(w_0)(nh)^{1/2}p''^{+}(w_0)h^2\boldsymbol{\delta}  
\end{pmatrix},
\end{gather*}
and by Lemma 4,
\begin{gather*}
\sqrt{nh} \begin{pmatrix} \mathbf{\bar{A}}_{DR^*,h}^+(\mathcal{O};\hat{G},\hat{S}) \\ \mathbf{\bar{A}}_{Z^*,h}^+ \end{pmatrix} 
\overset{d}{\longrightarrow} \{g(w_0)\}\mathcal{N}\bigg{(}0,\begin{pmatrix} \sigma_{DR}^{2+}(w_0;G_0,S^*)\mathcal{V} & \eta_{DR}^{+}(w_0;G_0,S^*)\mathcal{V} \\ \eta_{DR}^{+}(w_0;G_0,S^*)\mathcal{V} & p^+(w_0)(1-p^+(w_0))\mathcal{V} \end{pmatrix} \bigg{)}. 
\end{gather*}
Then by Slutsky's theorem, the result holds for the doubly robust estimator. Similarly, we can prove for $\dfrac{1}{nh} \displaystyle \sum_{i=1}^n \mathbf{A}_{IPCW^*,i,h}^+(O_i;\hat{G})$.
\end{proof}
\noindent For all lemmas, we can derive similar results for $W < w_0$. By combining the results from the cases $W \geq w_0$ and $W < w_0$, we are ready to prove the main theorems. Recall that 
\begin{gather*}
\tau^Y = \lim_{w \downarrow w_0} E(Y|W = w) - \lim_{w \uparrow w_0} E(Y|W = w) \\
 \tau^Z = \lim_{w \downarrow w_0} P(Z=1|W = w) - \lim_{w \uparrow w_0} P(Z=1|W = w) \\
\hat{\tau}_{FRD}^{IPCW}(G) = \frac{\hat{\alpha}_{R,IPCW}^{FRD,Y}(G) - \hat{\alpha}_{L,IPCW}^{FRD,Y}(G)}{\hat{\alpha}_R^{FRD,Z} - \hat{\alpha}_L^{FRD,Z}} \quad \hat{\tau}_{FRD}^{DR}(G,S) = \frac{\hat{\alpha}_{R,DR}^{FRD,Y}(G,S) - \hat{\alpha}_{L,DR}^{FRD,Y}(G,S)}{\hat{\alpha}_R^{FRD,Z} - \hat{\alpha}_L^{FRD,Z}} \\
\hat{\tau}_{SRD}^{IPCW}(G) = \hat{\alpha}_{R,IPCW}^{SRD,Y}(G) - \hat{\alpha}_{L,IPCW}^{SRD,Y}(G) \quad \hat{\tau}_{SRD}^{DR}(G,S) = \hat{\alpha}_{R,DR}^{SRD,Y}(G,S) - \hat{\alpha}_{L,DR}^{SRD,Y}(G,S).
\end{gather*}
Let 
\begin{gather*}
\rho^+ = \frac{\displaystyle \bigg{(}\int_0^{\infty} u^2K(u)du\bigg{)}^2 - \bigg{(}\int_0^{\infty} u^3K(u)du\bigg{)}\bigg{(}\int_0^{\infty} uK(u)du\bigg{)}}{2\displaystyle\bigg{(}\int_0^{\infty} u^2K(u)du\bigg{(}\int_0^{\infty} K(u)du\bigg{)} - \bigg{(}\int_0^{\infty} uK(u)du\bigg{)}^2\bigg{)}} \\
\rho^- = \frac{\displaystyle \bigg{(}\int_{-\infty}^{0} u^2K(u)du\bigg{)}^2 - \bigg{(}\int_{-\infty}^{0} u^3K(u)du\bigg{)}\bigg{(}\int_{-\infty}^{0} uK(u)du\bigg{)}}{2\displaystyle\bigg{(}\int_{-\infty}^{0} u^2K(u)du\bigg{(}\int_{-\infty}^{0} K(u)du\bigg{)} - \bigg{(}\int_{-\infty}^{0} uK(u)du\bigg{)}^2\bigg{)}} \\
\upsilon^+ = \frac{\displaystyle \int_0^{\infty}\bigg{\{}\bigg{(}\int_0^{\infty} s^2K(s)ds\bigg{)} - \bigg{(}\int_0^{\infty} sK(s)ds\bigg{)}u\bigg{\}}^2 \{K(u)\}^2 du}{\displaystyle g(w_0)\bigg{\{}\bigg{(}\int_0^{\infty} u^2K(u)du\bigg{)}\bigg{(}\int_0^{\infty} K(u)du\bigg{)} - \bigg{(}\int_0^{\infty} uK(u)du\bigg{)}^2\bigg{\}}^2} \\
\upsilon^- = \frac{\displaystyle \int_{-\infty}^{0}\bigg{\{}\bigg{(}\int_{-\infty}^{0} s^2K(s)ds\bigg{)} - \bigg{(}\int_{-\infty}^{0} sK(s)ds\bigg{)}u\bigg{\}}^2 \{K(u)\}^2 du}{\displaystyle g(w_0)\bigg{\{}\bigg{(}\int_{-\infty}^{0} u^2K(u)du\bigg{)}\bigg{(}\int_{-\infty}^{0} K(u)du\bigg{)} - \bigg{(}\int_{-\infty}^{0} uK(u)du\bigg{)}^2\bigg{\}}^2}.
\end{gather*}
Moreover, let
\begin{gather*}
\varphi_{FRD} = \frac{1}{\tau^Z}(\rho^+\mu''^+(w_0) - \rho^-\mu''^-(w_0)) - \frac{\tau^{Y}}{(\tau^Z)^2}(\rho^+p''^+(w_0) - \rho^-p''^-(w_0)) \\
\Sigma_{FRD}^{IPCW}(G) = \frac{1}{\tau^Z}(\upsilon^+\sigma_{IPCW}^{2+}(w_0;G)+\upsilon^-\sigma_{IPCW}^{2-}(w_0;G)) -2 \frac{\tau^Y}{(\tau^Z)^3}(\upsilon^+\eta_{IPCW}^{+}(w_0;G)+\\ \upsilon^-\eta_{IPCW}^{-}(w_0;G))  + \frac{(\tau^Y)^2}{(\tau^Z)^4}\{\upsilon^+p^+(w_0)(1-p^+(w_0)) + \upsilon^-p^-(w_0)(1-p^-(w_0))\} \\
\Sigma_{FRD}^{DR}(G,S) = \frac{1}{\tau^Z}(\upsilon^+\sigma_{DR}^{2+}(w_0;G,S)+\upsilon^-\sigma_{DR}^{2-}(w_0;G,S)) -2 \frac{\tau^Y}{(\tau^Z)^3}(\upsilon^+\eta_{DR}^{+}(w_0;G,S)+\\ \upsilon^-\eta_{DR}^{-}(w_0;G,S)) + \frac{(\tau^Y)^2}{(\tau^Z)^4}\{\upsilon^+p^+(w_0)(1-p^+(w_0)) + \upsilon^-p^-(w_0)(1-p^-(w_0))\}. 
\end{gather*}
and 
\begin{gather*}
\varphi_{SRD} = \rho^+\mu''^+(w_0) - \rho^-\mu''^-(w_0)  \\
\Sigma_{SRD}^{IPCW}(G) = \upsilon^+\sigma_{IPCW}^{2+}(w_0;G)+\upsilon^-\sigma_{IPCW}^{2-}(w_0;G)  \\
\Sigma_{SRD}^{DR}(G,S) = \upsilon^+\sigma_{DR}^{2+}(w_0;G,S)+\upsilon^-\sigma_{DR}^{2-}(w_0;G,S). 
\end{gather*}
\begin{proof}[Proof of Theorem 1]
Now we use a matrix to express terms. As before, we only prove for doubly robust estimator. Let $\mathbf{Y}_{DR}(\mathcal{O};\hat{G},\hat{S}) = \{Y_{DR,i}(O_i;\hat{G},\hat{S})\}_{i=1}^n$ and $\mathbf{Z} = \{Z_i\}_{i=1}^n$. Note that 
\begin{gather*}
\sum_{i=1}^n \begin{pmatrix} \mathbf{A}_{DR^*,i,h}^+(O_i;\hat{G},\hat{S}) \\ \mathbf{A}_{Z^*,i,h}^+ \end{pmatrix} = \begin{pmatrix} \mathbf{X}_h^T\mathbf{W}_h^+\mathbf{Y}_{DR}(\mathcal{O};\hat{G},\hat{S}) \\ \mathbf{X}_h^T\mathbf{W}_h^+\mathbf{Z} \end{pmatrix}
\end{gather*}
Similar calculation as Hahn, Todd and Van der Klaauw (1999), 
\begin{gather*}
n^{2/5}
\begin{pmatrix}
\hat{\alpha}_{R,DR}^{FRD,Y}(\hat{G},\hat{S}) - \mu^+(w_0) \\
\hat{\beta}_{R,DR}^{FRD,Y}(\hat{G},\hat{S}) - \mu'^+(w_0) \\
\hat{\alpha}_R^{FRD,Z} - p^+(w_0) \\
\hat{\beta}_R^{FRD,Z} - p'^+(w_0)
\end{pmatrix}
= \boldsymbol{\Psi}
\begin{pmatrix}
(\mathbf{X}_h^T\mathbf{W}_h^+\mathbf{X}_h)^{-1}\mathbf{X}_h^T\mathbf{W}_h^+\mathbf{Y}_{DR}(\mathcal{O};\hat{G},\hat{S}) \\
(\mathbf{X}_h^T\mathbf{W}_h^+\mathbf{X}_h)^{-1}\mathbf{X}_h^T\mathbf{W}_h^+\mathbf{Z}
\end{pmatrix},
\end{gather*}
where $\boldsymbol{\Psi} = \begin{pmatrix} 1 & 0 & 0 & 0\\ 0 & h^{-1} & 0 & 0 \\ 0 & 0 & 1 & 0\\ 0 & 0 & 0 & h^{-1} \end{pmatrix}$.
Then by Lemma 6 with Lemma 1,
\begin{gather*}
n^{2/5}
\begin{pmatrix}
\hat{\alpha}_{R,DR}^{FRD,Y}(\hat{G},\hat{S}) - \mu^+(w_0) \\
\hat{\beta}_{R,DR}^{FRD,Y}(\hat{G},\hat{S}) - \mu'^+(w_0) \\
\hat{\alpha}_R^{FRD,Z} - p^+(w_0) \\
\hat{\beta}_R^{FRD,Z} - p'^+(w_0)
\end{pmatrix}
- \frac{1}{2}\boldsymbol{\Psi}^{-1}\begin{pmatrix} \boldsymbol{\Omega}^{-1} & 0 \\ 0 & \boldsymbol{\Omega}^{-1}\end{pmatrix}\begin{pmatrix}  \mu''^+(w_0) \boldsymbol{\delta} \\ p''^+(w_0) \boldsymbol{\delta}\end{pmatrix} \overset{d}{\longrightarrow} \\
\mathcal{N}\bigg{(}0, g(w_0)^{-1}\boldsymbol{\Psi}^{-1}\begin{pmatrix} \boldsymbol{\Omega}^{-1} & 0 \\ 0 & \boldsymbol{\Omega}^{-1}\end{pmatrix}\begin{pmatrix}\sigma_{DR}^{2+}(w_0;G_0,S^*)\mathcal{V} & \eta_{DR}^{+}(w_0;G_0,S^*)\mathcal{V} \\ \eta_{DR}^{+}(w_0;G_0,S^*)\mathcal{V} & p^{+}(w_0)(1-p^{+}(w_0))\mathcal{V} \end{pmatrix} \begin{pmatrix} \boldsymbol{\Omega}^{-1} & 0 \\ 0 & \boldsymbol{\Omega}^{-1}\end{pmatrix} \boldsymbol{\Psi}^{-1}  \bigg{)}. 
\end{gather*}
Then 
\begin{gather*}
\hspace*{-20mm}
n^{2/5}
\bigg{(}
\begin{pmatrix}
\hat{\alpha}_{R,DR}^{FRD,Y}(\hat{G},\hat{S}) - \mu^+(w_0) \\
\hat{\alpha}_R^{FRD,Z} - p^+(w_0) 
\end{pmatrix}
- \rho^+ \begin{pmatrix} \mu''^+(w_0) \\ p''^+(w_0) \end{pmatrix}\bigg{)} \overset{d}{\longrightarrow} \mathcal{N}\bigg{(}0, \upsilon^+\begin{pmatrix} \sigma_{DR}^{2+}(w_0;G_0,S^*) & \eta_{DR}^+(w_0;G_0,S^*) \\ \eta_{DR}^+(w_0;G_0,S^*) & p^+(w_0)(1-p^+(w_0)) \end{pmatrix}\bigg{)}.
\end{gather*}
Similarly,
\begin{gather*}
\hspace*{-22.5mm}
n^{2/5}
\bigg{(}
\begin{pmatrix}
\hat{\alpha}_{L,DR}^{FRD,Y}(\hat{G},\hat{S}) - \mu^-(w_0) \\
\hat{\alpha}_L^{FRD,Z} - p^-(w_0) \\
\end{pmatrix}
- \rho^- \begin{pmatrix} \mu''^-(w_0) \\ p''^-(w_0) \end{pmatrix}\bigg{)} \overset{d}{\longrightarrow} \mathcal{N}\bigg{(}0, \upsilon^-\begin{pmatrix} \sigma_{DR}^{2-}(w_0;G_0,S^*) & \eta_{DR}^-(w_0;G_0,S^*) \\ \eta_{DR}^-(w_0;G_0,S^*) & p^-(w_0)(1-p^-(w_0)) \end{pmatrix}\bigg{)}.
\end{gather*}
Then
\begin{gather*}
\hspace*{-20mm}
n^{2/5}\bigg{(}
\begin{pmatrix}
\hat{\alpha}_{R,DR}^{FRD,Y}(\hat{G},\hat{S}) - \hat{\alpha}_{L,DR}^{FRD,Y}(\hat{G},\hat{S}) - (\mu^+(w_0) - \mu^-(w_0))\\
\hat{\alpha}_R^{FRD,Z} - \hat{\alpha}_L^{FRD,Z} - (p^+(w_0) - p^-(w_0))\\
\end{pmatrix}
\bigg{)}
-  \begin{pmatrix} \rho^+ \mu''^+(w_0) - \rho^- \mu''^-(w_0) \\ \rho^+ p''^+(w_0) - \rho^- p''^-(w_0) \end{pmatrix}  
\overset{d}{\longrightarrow} \mathcal{N}(0, \Xi),
\end{gather*}
where
\begin{gather*}
\Xi = \upsilon^+\begin{pmatrix} \sigma_{DR}^{2+}(w_0;G_0,S^*) & \eta_{DR}^+(w_0;S^*) \\ \eta_{DR}^+(w_0;G_0,S^*) & p^+(w_0)(1-p^+(w_0)) \end{pmatrix} + \upsilon^-\begin{pmatrix} \sigma_{DR}^{2-}(w_0;G_0,S^*) & \eta_{DR}^-(w_0;G_0,S^*) \\ \eta_{DR}^-(w_0;G_0,S^*) & p^-(w_0)(1-p^-(w_0)) \end{pmatrix}.
\end{gather*}
By the delta method, the result follows. The proof for $\hat{\tau}_{FRD}^{IPCW}(\hat{G})$ is similar. 
\end{proof}
\noindent From Theorem 1, we can derive Corollary 1 because the corollary is a special case of Theorem 1. Next, we prove Theorem 2.

\begin{proof}[Proof of Theorem 2]
By Steingrimsson, Diao, and Strawderman (2019) and Suzukawa (2004), variance formulae for IPCW and DR estimator are given by 
\begin{gather*}
\sigma_{IPCW}^2(w;G_0)  = Var(Y_{IPCW}(O;G_0)|W=w) = \int_0^{\infty} \frac{Y^2}{G_0(u)}dF_0(u|w) - \{\mu(w)\}^2\\
\sigma_{DR}^2(w;G_0,S) = Var(Y_{DR}(O;G_0,S)|W=w) = \int_0^{\infty} \frac{Y^2}{G_0(u)}dF_0(u|w)  - \\ \int_0^{\infty} \frac{S_0(u|w)\{Q_Y(u,w,S)(Q_Y(u,w,S) - 2Q_Y(u,w,S_0))\}}{\{G_0(u)\}^2}d\bar{G}_0(u) - \{\mu(w)\}^2\\
\sigma_{DR}^2(w;G_0,S_0) = Var(Y_{DR}(O;G_0,S_0)|W=w) = \int_0^{\infty} \frac{Y^2}{G_0(u)}dF_0(u|w)  - \\ \int_0^{\infty} \frac{S_0(u|w)\{Q(u,w,S_0)\}^2}{\{G_0(u)\}^2}d\bar{G}_0(u) - \{\mu(w)\}^2.
\end{gather*}
Hence for any $w$, we have $\sigma_{DR}^2(w,G_0,S_0) \leq \min\{\sigma_{IPCW}^2(w;G_0), \sigma_{DR}^2(w;G_0,S)\}$. Clearly,
\begin{gather*}
\sigma_{DR}^{2+}(w_0;G_0,S_0) \leq \min\{\sigma_{IPCW}^{2+}(w_0;G_0),\sigma_{DR}^{2+}(w_0;G_0,S)\} \\
\sigma_{DR}^{2-}(w_0;G_0,S_0) \leq \min\{\sigma_{IPCW}^{2-}(w_0;G_0),\sigma_{DR}^{2-}(w_0;G_0,S)\}.
\end{gather*} 
Since the $\upsilon^+$ and $\upsilon^-$ are positive, the result holds. For the fuzzy RD estimator,
\begin{gather*}
\Sigma_{SRD}^{DR}(G_0,S_0) - \Sigma_{SRD}^{IPCW}(G_0) = \frac{1}{\tau^Z}[\upsilon^+\{\sigma_{DR}^{2+}(w_0;G_0,S_0) - \sigma_{IPCW}^{2+}(w_0;G_0)\} \\
+\upsilon^-\{\sigma_{DR}^{2-}(w_0;G_0,S_0) - \sigma_{IPCW}^{2-}(w_0;G_0)\}] -2 \frac{\tau^Y}{(\tau^Z)^3}[\upsilon^+\{\eta_{DR}^{+}(w_0;G_0,S_0) - \eta_{IPCW}^{+}(w_0;G_0)\} \\
+\upsilon^-\{\eta_{DR}^{-}(w_0;G_0,S_0) - \eta_{IPCW}^{-}(w_0;G_0)\}]. 
\end{gather*} 
Since calculation at sharp RD case, $\sigma_{DR}^{2+}(w_0;G_0,S_0) \leq \sigma_{IPCW}^{2+}(w_0;G_0)$ and $\sigma_{DR}^{2-}(w_0;G_0,S_0) \leq \sigma_{IPCW}^{2-}(w_0;G_0)$. Moreover, from conditions $\eta_{DR}^{+}(w_0;G_0,S_0) \leq \eta_{IPCW}^{+}(w_0;G_0)$ and $\eta_{DR}^{-}(w_0;G_0,S_0) \leq \eta_{IPCW}^{-}(w_0;G_0)$, with $\tau^Y > 0$ and $\tau^Z > 0$, we have $\Sigma_{FRD}^{DR}(G_0,S_0) \leq \Sigma_{FRD}^{IPCW}(G_0)$. Similarly, $\Sigma_{FRD}^{DR}(G_0,S_0) \leq \Sigma_{FRD}^{DR}(G_0,S)$. Hence
\begin{gather*} 
AVar(\hat{\tau}_{FRD}^{DR}(G_0,S_0)) \leq \min\{AVar(\hat{\tau}_{FRD}^{IPCW}(G_0)),AVar(\hat{\tau}_{FRD}^{DR}(G_0,S))\}.
\end{gather*}
\end{proof}

\end{document}